\documentclass{article}
\usepackage{arxiv}
\usepackage{graphicx} 

\usepackage{amsfonts, amsmath, amsthm}
\usepackage{hyperref}
\usepackage{mathtools}
\usepackage{xcolor}
\usepackage{subcaption}

\usepackage{placeins}

\usepackage{pgfplots}
\usepackage{pgfplotstable}
\pgfplotsset{compat=1.18}
\usepgfplotslibrary{groupplots}

\definecolor{ferngreen}{HTML}{789024}
\definecolor{winered}{HTML}{A63852}
\definecolor{perfumepurple}{HTML}{c0affb}
\definecolor{apricotorange}{HTML}{e6a176}
\definecolor{orientblue}{HTML}{00678a}
\definecolor{downygreen}{HTML}{5eccab}

\usepackage{cleveref} 

\newtheorem{remark}{Remark}
\newtheorem{definition}{Definition}
\newtheorem{corollary}{Corollary}
\newtheorem{assumption}{Assumption}

\crefname{remark}{Remark}{Remarks}
\crefname{definition}{Definition}{Definitions}
\crefname{corollary}{Corollary}{Corollaries}
\crefname{assumption}{Assumption}{Assumptions}

\usepackage[backend=biber,style=numeric-comp,url=false,isbn=false]{biblatex}
\AtEveryBibitem{\clearfield{month}}
\AtEveryBibitem{\clearfield{day}}

\DeclareMathOperator{\Tr}{Tr}

\newtheorem{example}{Example}

\DeclarePairedDelimiterX{\infdivx}[2]{(}{)}{%
  #1\;\delimsize\|\;#2%
}
\DeclareMathOperator*{\argmax}{arg\,max}
\DeclareMathOperator*{\argmin}{arg\,min}
\newcommand{\KL}{D_\mathrm{KL}\infdivx*}
\DeclarePairedDelimiter{\norm}{\lVert}{\rVert}

\title{Likelihood-informed dimension reduction across tempered Bayesian posteriors}

\author{%
    \href{https://orcid.org/0000-0002-2745-1982}{\includegraphics[scale=0.06]{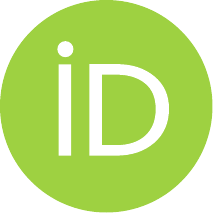}\hspace{1mm}Arne Bouillon}\\
    Department of Computer Science (NUMA)\\
    KU Leuven\\
    Celestijnenlaan 200A, 3001 Leuven, Belgium\\
    \texttt{arne.bouillon@kuleuven.be}\\
    \And
    \href{https://orcid.org/0000-0001-7374-0382}{\includegraphics[scale=0.06]{orcid.pdf}\hspace{1mm}Oliver R. A. Dunbar} \\
	Division of Geological and Planetary Sciences\\
	California Institute of Technology\\
	Pasadena, CA, United States \\
	\texttt{odunbar@caltech.edu}
}

\begin{document}



\maketitle

\begin{abstract}
    Scientific computer simulations cannot represent all scales in realistic applications. To bridge this model-data gap, parameters are injected into models and constrained with noisy data using Bayesian inversion. To reduce the number of simulator evaluations, which can be $10^5$ or more, modern approaches employ dimension reduction in conjunction with emulation of the forward map (that contains the simulator). Due to scarcity of model evaluations and data, this dimension reduction becomes very important for posterior sampling performance. Recent work on likelihood-informed subspaces (LIS) truncates to informative directions by optimizing bounds on information loss, and though mathematically well-adapted to sampling, they are often restrictive in practice.

    In this work, we provably generalize this methodology to facilitate application to  \emph{$\alpha$-tempered} (i.e., \emph{annealed}, \emph{power-posterior}) distributions for $\alpha\in[0,1]$. We provide theory to build partially-informed spaces termed $\alpha$-LIS. We show how $\alpha < 1$ can often produce near-optimal spaces. In addition, we focus on applying $\alpha$-LIS to practical cases, where the available data is severely limited and noisy. We propose and test extensions for utilizing data from the entire sequence of distributions $\alpha_0< \dots < \alpha_k$, and use simple approximations of model gradients so that our approach can be used for emulation of forward maps for chaotic or stochastic systems where derivatives are unavailable or uninformative due to noise. In experiments, our accumulated approach is much more robust to these challenging circumstances than the theoretically optimal $\alpha=1$.
\end{abstract}

\section{Introduction}
The success of machine learning and over-parameterization has allowed new closures and data-driven representations of unresolved processes in scientific computer simulators. However, with these high-dimensional parameters, the curse of dimensionality makes calibration (finding optimal parameters to fit data) and uncertainty quantification (UQ, sampling a joint distribution for likely parameters) challenging. We must use algorithms that scale well to a high-dimensional setting.

Calibration and UQ algorithms require evaluations of the simulators embedded within their loss or likelihood functions. Scientific simulators are frequently expensive black-box software, with a limited number of evaluations available. In these situations, robust dimension reduction techniques (as a way of pre-processing) play an important yet underrepresented role in algorithm performance.

Posterior sampling for such models commonly requires in excess of $10^5$ model evaluations with, e.g., Markov chain Monte Carlo (MCMC)~\cite{Gey11:book}. To offset this, models can be replaced with surrogates that have more amenable properties: they are cheaper, lower-dimensional, or smoother. This includes multi-fidelity or multi-level approaches~\cite{BiHigLee02,ChrFox05,HoaSchStu13} and data-driven emulators of either the model~\cite{Hig_etal04,Jin08,BloFraNarWuk10,CleGarLanSchStu21} (a focus of our investigation) or the loss/likelihood function~\cite{MarXiu09,TecStu18,BegHerLou20}. In data-driven emulation, a design question arises: how does one select input-output pairs to train an emulator, without falling victim to the curse of dimensionality? Goal-independent strategies such as space-filling designs, Latin-hypercube sampling, and sparse grids~\cite{CheLinSim01,CroDhaLae11} may suffer in high-dimensional problems. Meanwhile, other methods show more success by addressing a goal-oriented design problem with active or sequential approaches~\cite{NowSin17, CleGarLanSchStu21, PluSurWil24,CatChaHolKra26_pre}, and some of these are themselves enhanced with emulators (e.g., Bayesian history matching~\cite{DroNotPag21, KinManShe24}).

In the recent \emph{calibrate, emulate, sample} approach~\cite{CleGarLanSchStu21,Dun_etal24}, the authors propose to address the training point design for UQ with variants of ensemble Kalman inversion (EKI)~\cite{CheOli12,IglLawStu13,SchStu17,ChaCheSan20,HuaHuaReiStu22,CalReiStu25, Dun_etal22}. EKI is an optimization and sampling method that scales to high dimensions with relatively few evaluations and can handle complex loss functions~\cite{DunGjiMorSch25_pre,KovStu19}. The strict assumptions (linearity and Gaussianity) underlying the updates of EKI make it unsuitable for accurate posterior sampling, but it can efficiently place input-output training pairs in regions of high posterior mass. The heuristic is that resulting emulators have high accuracy in these regions, where they most often make predictions within a sampling algorithm. Despite this tailored sampling, training the high-dimensional emulator is still challenging: only about $10^2$--$10^3$ training pairs are realistically available, and emulators such as Gaussian processes (GPs) are expensive in high dimensions.

Motivated by this final challenge, we will investigate the most efficient use of samples generated from a sequence of approximating distributions for the purpose of dimension reduction in sampling. We consider the task of posterior sampling and, later, emulator-accelerated posterior sampling, and measure performance in terms of distances between the reduced and full posterior distributions. Our algorithms, based on optimization of information-theoretic bounds on the posterior, are known as likelihood-informed subspace (LIS) methods, with data-based (DB) and data-free (DF) variants~\cite{CuiMarMarSolSpa14, BapMarZah22_pre, CuiZah21, CuiTon22}. The information-theoretic approach is quite general, and is directly linked to active subspaces for inference~\cite{Con15:book,10.1137/15M1042127,NguWittBui22} and prior-preconditioned Hessian approaches~\cite{Fla_etal11, BurGarMarWil12}. It naturally generalizes canonical correlation analysis (CCA)---see, e.g.,~\cite{BapMarZah22_pre}. A conceptually related approach for dimension reduction of input and output variables---however, in linear dynamical systems instead of Bayesian inverse problems---is known as \emph{balanced truncation}~\cite{1102568,1084254}. It was specialized to the case of Bayesian inversion in~\cite {qian2022model}. New work also directly investigates the differences in prior-to-posterior covariance~\cite{PolMaiSocGes26_pre}, which is suitable for intractable likelihoods.

Motivated by sequential ensemble methodologies, we construct a flexible generalization of current LIS methods to any of the 1-parameter family of \emph{$\alpha$-tempered} posteriors ($\alpha\in [0,1]$). We prove that this generalizes DB-LIS ($\alpha=1$, linked to the posterior distribution) and DF-LIS ($\alpha=0$, linked to the prior), and term this generalization $\alpha$-LIS. It can be defined for any $\alpha$ as a linear map from the full to a reduced space, optimizing an upper bound on the Kullback--Leibler divergence between the reduced and full posterior. We demonstrate its interpolative properties over a series of numerical studies to illustrate algorithm performance, when one has draws from distributions that are neither the prior ($\alpha=0$) nor the posterior ($\alpha=1$). The numerical studies consider different data sample sizes, for linear and nonlinear problems, with direct sampling or by first training an emulator. The nonlinear problems include a PDE simulator (solving Darcy flow inversion) and a chaotic Lorenz `96 model, solving an inversion for a spatially-dependent forcing. 

Beyond proof-of-theory numerics, we focus on applicability to practical, highly constrained situations. We propose practical extensions for some key limitations in current LIS methodologies. \begin{itemize}
    \item First, we propose how one can compute gradient-free approximations of the $\alpha$-LIS subspaces, based on available samples. This addresses the fact that many models must be treated without gradients, as these might be unavailable (e.g., in a legacy code) or uninformative (such as for chaotic or stochastic simulators). Our simple approximation is to take a global statistical linearization over the samples (as presented in, e.g.,~\cite{ChaCheSan20,ungaralaIteratedFormsKalman2012,CalReiStu25}).
    
    Other gradient-free approaches include score approximation~\cite{BapBreMar24_pre} for LIS, as well as traditional finite differences (for active subspaces, see~\cite{ConGlei14}). Covariance-informed subspace methods are also naturally gradient-free~\cite{PolMaiSocGes26_pre}.

    \item Second, we propose a way to combine operators that describe $\alpha$-LIS spaces from a sequence of $\alpha$-tempered posteriors to produce a suitable unified subspace. This extension seeks to leverage all possible information from a sequence of distributions, as tempered algorithms are iterative and necessarily generate a full sequence of distributions, not just one.
\end{itemize}

In many applications, practitioners use off-the-shelf approaches for dimension reduction that optimize a reconstruction error (e.g., projecting onto principal components or nonlinear basis functions). This often optimizes the input and output reductions independently, rather than utilizing their combined information. We compare our approaches to PCA-based reduction in all investigations as an affine benchmark.

Sampling from a sequence of approximating distributions is not limited to the ensemble approaches we consider here (e.g.,~\cite{IglParTre18,IglYan21}), and so our dimension reduction approach can apply to other methodologies such as simulated annealing~\cite{MarPar92,DelMonSus18}, annealed importance sampling and sequential Monte Carlo~\cite{Nea96, Nea01,DelDouJas06,ChoPap20}, thermodynamic integration~\cite{LarPhi06,FriPet08}, and parallel tempering~\cite{EarDee05}. One simply needs to generate samples of the different tempered posterior distributions.

\subsection{Our contributions}
In this investigation, we make contributions to theory and practice to improve affine dimension reduction for inverse problems:
\begin{enumerate}
    \item[(C1)] We define a generalized $\alpha$-LIS, a 1-parameter family that bridges between DF-LIS ($\alpha=0$) and DB-LIS ($\alpha=1$) to compute a LIS from any $\alpha$-tempered posterior distribution. We propose extensions of the $\alpha$-LIS framework to handle important practical settings such as non-differentiable likelihoods and accumulating data from a sequence of distributions for $0=\alpha_0<\dots< \alpha_k\leq1$ (as available from a tempering algorithm). 
    \item[(C2)] We perform a broad assessment with numerical experiments. For large sample sizes, we validate the theory and explore the accuracy-computation tradeoff. We show that near-optimal performance can often be seen for $\alpha < 1$, and sometimes significantly smaller. For more moderate sample sizes and in the presence of noise or gradient approximations, we observe that optimal numerical performance is (surprisingly) found for $0<\alpha < 1$, as $\alpha=1$ becomes numerically non-robust. The observed tradeoff for large sample size is replaced with a ``best of both worlds" performance in accuracy and robustness for moderate sample sizes. Especially the accumulated approach performs well.
    \item[(C3)] We implement a concrete application of these algorithms to aid within the \emph{calibrate, emulate, sample} framework for accelerated derivative-free sampling. Here an ensemble Kalman method is used to approximate tempered distributions. Following this, we create likelihood-informed subspaces that aid in training an emulator that can be used within a traditional sampling framework to approximate the full posterior. Such a framework leverages the data-efficiency and robustness of ensemble Kalman methods, and uses this information judiciously to create the best subspace to aid the less robust and scalable problem of supervised learning with limited data. We apply this successfully to sample the posterior of parameters arising in a chaotic Lorenz `96 system.
\end{enumerate}

\section{Bayesian inverse problems and dimension reduction} \label{sec:bip}
We are solving a Bayesian inverse problem formulated as
\begin{equation}\label{eq:bip:ip}
    Y = G(X) + \eta, \qquad \eta \sim \mathcal N(0, \Gamma),
\end{equation}
where $X\in\mathbb R^{d_x}$ are model parameters and $Y\in\mathbb R^{d_y}$ is an observation, both treated as random variables. Here, $\mathcal N(0, \Gamma)$ denotes a multivariate normal distribution with zero mean and symmetric positive definite covariance matrix~$\Gamma$. We will write $\pi_Z$ for the probability density function of a random variable $Z$. Consider a prior distribution~$\pi_X(x)$ and the likelihood
\begin{equation} \label{eq:bip:likelihood}
    \pi_{Y\mid X}(y\mid x) \propto \exp\left(-\frac12\norm{y-G(x)}_\Gamma^2\right)
\end{equation}
implied by \eqref{eq:bip:ip}, where $\norm z_\Gamma\coloneqq\sqrt{z^\top \Gamma^{-1}z}$ denotes the Mahalanobis norm. The broad aim is to draw samples from the posterior distribution $\pi_{X\mid Y}(x\mid y^\dagger)$, where $y^\dagger$ is a known observation. The posterior can be expressed by Bayes' rule in terms of the prior and likelihood as
\begin{equation} \label{eq:bip:post}
    \pi_{X\mid Y}(x\mid y) = \frac{\pi_X(x)\pi_{Y\mid X}(y\mid x)}{\pi_Y(y)} \propto \pi_X(x)\pi_{Y\mid X}(y\mid x).
\end{equation}
There are many algorithms to (approximately) sample from $\pi_{X\mid Y}(x\mid y^\dagger)$. Part of them work by evolving samples from the prior $\pi_X$ to the posterior, taking steps through a family of \emph{tempered} distributions
\begin{equation}
   \pi^{(\alpha)}_{X\mid Y}(x\mid y^\dagger)\propto \pi_X(x)\pi_{Y\mid X}(y^\dagger\mid x)^\alpha
\end{equation}
with $\alpha\in[0,1]$.

\begin{example}[Ensemble Kalman inversion] \label{ex:eki}
    For a given ensemble size $J\geq 3$, and initial pseudo-time index $n=0$, sample $x_0\coloneqq\{x_0^{(j)}\}_{j=1}^J$ from the prior distribution. We define ensemble Kalman inversion (EKI) as the following iterative updates applied to this ensemble for $n=0,\dots, N-1$:
    \begin{equation}\label{eq:eki}
        x^{(j)}_{n+1} = x^{(j)}_{n} + C^{xG}_n(C^{GG}_n+\Delta t_n^{-1}\Gamma)^{-1}(y^{(j)} - G(x^{(j)}_n)), \quad \text{ where} \quad y^{(j)} = y^\dagger + \sqrt{\Delta t_n^{-1}} \gamma^{(j)}, \quad \gamma^{(j)}\sim N(0,\Gamma).
    \end{equation}
    Here, the empirical means and covariances between the ensemble members are given by
    \begin{align*}
     C^{xG}_n &= \frac{1}{J}\sum_{j=1}^J \Big[ (x^{(j)}_n - \overline{x_n})\otimes(G(x^{(j)}_n) - \overline{G(x_n)}) \Big], \qquad &\overline{x}_n &= \frac{1}{J}\sum_{j=1}^J x^{(j)}_n,
        \\
        C^{GG}_n &= \frac{1}{J}\sum_{j=1}^J \left[(G(x^{(j)}_n) - \overline{ G(x_n)})\otimes(G(x^{(j)}_n) - \overline{ G(x_n)}) \right],\qquad &\overline{G(x_n)} &= \frac{1}{J} \sum_{j=1}^J G(x^{(j)}_n).
    \end{align*}
    This is a particle discretization of a statistically optimal update for linear forward maps with Gaussian prior and noise distributions. The pseudo-timesteps $0<\Delta t_n\leq 1$ are introduced to help handle nonlinearity through iteration and the algorithm is terminated at $N$ such that the $\Delta t_n$ satisfy
    \begin{equation}
        \sum_{n=0}^N\Delta t_n = 1.
    \end{equation}
    Prior work~\cite{IglParTre18,IglYan21,HuaHuaReiStu22} illustrates how this algorithm sequentially (and approximately) draws ensembles from the tempered posterior:
    \begin{equation}
        x^{(j)}_k \overset{\mathrm{approx.}}\sim \pi^{(\alpha_k)}_{X\mid Y}(x\mid y^\dagger)\qquad\text{with } \alpha_k \coloneqq \sum_{n=1}^k\Delta t_n.
    \end{equation}
    Effective adaptive methods for selecting the pseudo-timestep are available (e.g.,~\cite{IglYan21,KovStu19}), so that at termination time ($\alpha_N=1$) the final ensemble $x^{(j)}_N$ approximate samples of the posterior. Many variants of this algorithm exist, and for computational reasons, we will complete experiments in this paper using the ensemble transform Kalman inversion (ETKI) algorithm~\cite{DunGjiMorSch25_pre,BisEthSha01,huangIteratedKalmanMethodology2022c}. We do not present its update here; it reformulates the linear algebra of \eqref{eq:eki} to avoid storing or solving the linear systems with $(C^{GG}_n+\Delta t^{-1}_n \Gamma)$, thus providing advantageous scaling in the data space.
\end{example}

In this work, we are interested in approximately sampling from \eqref{eq:bip:post} by applying sampling algorithms in a reduced dimension $(r, s)$ instead of the full dimension $(d_x, d_y)$. Specifically, we investigate linear reductions: we look for orthogonal matrices $[U_r, U_\perp]\in\mathbb R^{d_x\times d_x}$ and $[V_s, V_\perp]\in\mathbb R^{d_y\times d_y}$, where $U_r$ and $V_s$ have $r$ and $s$ columns, respectively. Introducing
\begin{subequations} \label{eq:bip:XrYs}
\begin{align}
    X_r &\coloneqq U_r^\top X, &X_\perp &\coloneqq U_\perp^\top X,\\
    Y_s &\coloneqq V_s^\top Y, &Y_\perp &\coloneqq V_\perp^\top Y,
\end{align}
\end{subequations}
such that $X=U_rX_r+U_\perp X_\perp$, $Y=V_sY_s + V_\perp Y_\perp$, we define the approximate likelihood
\begin{equation} \label{eq:bip:pi*}
    \pi_{Y\mid X}^*(y\mid x) \coloneqq \pi_{Y_s\mid X_r}(y_s\mid x_r),
\end{equation}
where we use the marginal likelihood
\begin{equation} \label{eq:bip:marg-likelihood}
\begin{aligned}
    \pi_{Y_s\mid X_r}(y_s\mid x_r) &= \iint \pi_{Y\mid X}(y\mid x)\pi_{X_\perp\mid X_r}(x_\perp\mid x_r)\,\mathrm dx_\perp\,\mathrm dy_\perp\\
    &\propto \int\exp\left(-\frac12\norm{y_s - V_s^\top G(x)}_{V_s^\top\Gamma V_s}^2\right)\pi_{X_\perp\mid X_r}(x_\perp\mid x_r)\,\mathrm dx_\perp.
\end{aligned}
\end{equation}
Likewise, we can introduce an approximate posterior $\pi_{X\mid Y}^*(x\mid y^\dagger)$ with
\begin{equation} \label{eq:bip:approx-post}
    \pi_{X\mid Y}^*(x\mid y) \coloneqq \frac{\pi_{Y\mid X}^*(y\mid x)\pi_X(x)}{\int \pi_{Y\mid X}^*(y'\mid x)\pi_X(x)\,\mathrm dy'} = \frac{\pi_{Y\mid X}^*(y\mid x)\pi_X(x)}{\pi_{Y_s}(y_s)};
\end{equation}
the last equality is proven in~\cite[p.\ 34]{BapMarZah22_pre}. Sampling $x$ from the approximate posterior $\pi_{X\mid Y}^*(\cdot\,\mid y^\dagger)$ is done by first sampling $x_r$ with density $\propto \pi_{Y_s\mid X_r}(y_s^\dagger\mid\,\cdot)\pi_{X_r}(\cdot)$---an $(r, s)$-dimensional problem---and then conditionally sampling $x_\perp$ from $\pi_{X_\perp\mid X_r}(\cdot\,\mid x_r)$. In practice, the likelihood $\pi_{Y_s\mid X_r}$ from \eqref{eq:bip:marg-likelihood} itself is also approximated with nested Monte Carlo sampling from $\pi_{X_\perp\mid X_r}$, but this typically needs very few samples~\cite{CuiTon22}. This is expected: by design, $X_\perp$ has a limited impact on the likelihood term.

Henceforth, we often relax subscripts of $\pi$ in notation to improve readability.

\begin{remark}
   In \cite{CuiTon22}, two other ways to define an approximate likelihood $\pi^*_{Y\mid X}$ are mentioned: instead of marginalizing over the likelihood, they also propose \begin{itemize}
        \item the square of the marginalized square-root of the likelihood, and
        \item the exponential of the marginalized logarithm of the likelihood.
    \end{itemize}
    These all have different properties. In addition, one could consider using the likelihood evaluated in
    \begin{equation}
        \int G(x)\pi_{X_\perp\mid X_r}(x_\perp\mid x_r)\,\mathrm dx_\perp.
    \end{equation}

    In this paper, we always sample \eqref{eq:bip:marg-likelihood} with only one Monte Carlo sample from $\pi_{X_\perp\mid X_r}(\cdot\,\mid x_r)$, motivated by settings where $G$ is very expensive. This makes all four approaches equivalent.
\end{remark}

\subsection{PCA-based dimension reduction}
A simple and common way to find candidates for $U_r$ and $V_s$ is to look at the priors $\pi_X$ and $\pi_Y$, our beliefs about $X$ and $Y$ before making an observation, and keep only those directions that show the greatest variance. This is a principal-component analysis (PCA) of $X$ and $Y$. Specifically, $U_r$ and~$V_s$ are set to the leading $r$ and $s$ eigenvectors of $\mathrm{Cov}(X)$ and $\mathrm{Cov}(Y)$, respectively. These are easily computed: the covariance of $X$ is usually readily available and that of $Y$ can be estimated by combining Monte Carlo sampling from $G(X)$ with the known covariance $\Gamma$.

A major limitation of this approach, and nonlinear variants thereof, is that it does not consider the wider context of the problem: approximating $\pi(\cdot\mid y^\dagger)$. This is clearest in the input space: $U_r$ in no way depends on $G$ and $\Gamma$. Should $G$ vary sharply along a direction with low prior variance, PCA will nevertheless discard this direction. The relation between $X$ and $Y$ is not considered; instead, their spaces are reduced in isolation.

\subsection{Likelihood-informed subspaces with diagnostic matrices}
The problem of finding $U_r$ and $V_s$ with the goal that $\pi_{X\mid Y}^*$ be close to $\pi_{X\mid Y}$ has received significant attention. Most works beyond PCA focus on reducing only the input space, such that in the output space $s=d_y$ and~$V_s=I$. While existing results are typically formulated for more general likelihoods than the Gaussian \eqref{eq:bip:likelihood}, we will specialize those methods to the Gaussian case in our exposition. Non-Gaussian priors and non-linear forward maps $G$ are still included in our discussion.
\begin{remark}[Whitening] \label{rem:intro:white}
    In this subsection we describe dimension reduction applied to the whitened variables
    \begin{subequations} \label{eq:intro:white}
    \begin{align}
        \bar X &\coloneqq \Gamma_0^{-1/2}X,\\
        \bar Y &\coloneqq \Gamma^{-1/2}Y,
    \end{align}
    \end{subequations}
    where $\Gamma_0$ is the covariance\footnote{Using $\bar X\coloneqq\Gamma_0^{-1/2}X$ is only a heuristic choice and can straightforwardly be replaced by other whitening techniques.} of $\pi_X$. They are related as
    \begin{equation}
        \bar Y = \bar G(\bar X) + \bar\eta \coloneqq \Gamma^{-1/2}G(\Gamma_0^{1/2}\bar X) + \Gamma^{-1/2}\eta,
    \end{equation}
    such that $\bar\eta\sim\mathcal N(0,\bar\Gamma)$ with $\bar\Gamma=I$, and $\bar X$ has covariance $\bar\Gamma_0=I$, although it need not be Gaussian. Applying the dimension reduction algorithm to the problem $\bar Y=\bar G(\bar X)+\bar\eta$ results in matrices $\bar U_r$ and $\bar V_s$ that define reduced variables $\bar X_r \coloneqq \bar U_r^\top \bar X = \bar U_r^\top  \Gamma_0^{-1/2}X$ and $\bar Y_s \coloneqq \bar V_s^\top \bar Y = \bar V_s^\top  \Gamma^{-1/2}Y$. Recalling \eqref{eq:bip:XrYs}, we want
    \begin{subequations}
    \begin{align}
        \mathrm{ker}(U_r^\top ) &= \mathrm{ker}(\bar U_r^\top \Gamma_0^{-1/2}) &\Leftrightarrow &&\mathrm{Col}(U_r) &= \mathrm{Col}(\Gamma_0^{-1/2}\bar U_r),\\
        \mathrm{ker}(V_s^\top ) &= \mathrm{ker}(\bar V_s^\top \Gamma^{-1/2}) &\Leftrightarrow &&\mathrm{Col}(V_s) &= \mathrm{Col}(\Gamma^{-1/2}\bar V_s).
    \end{align}
    \end{subequations}
    The matrices $U_r$ and $V_s$ can then be computed by orthogonalizing $\Gamma_0^{-1/2}\bar U_r$ and $\Gamma^{-1/2}\bar V_s$, respectively.
\end{remark}

\subsubsection{Input space.}
For the input space, there have been a variety of definitions of \emph{diagnostic matrices}, matrices whose leading $r$ eigenvectors can be used as a good subspace $\bar U_r$. For instance,~\cite{BapMarZah22_pre,CuiZah21} propose to use
\begin{equation} \label{eq:intro:HX}
    H_{\bar X} \coloneqq \int \Gamma_0^{1/2}\nabla G(x)^\top \Gamma^{-1}\nabla G(x)\Gamma_0^{1/2}\pi(x)\,\mathrm dx.
\end{equation}
Notice that the matrix does not depend on $y^\dagger$ and thus results in a data-independent reduced space. On the other hand,~\cite{CuiMarMarSolSpa14} proposes a data-dependent diagnostic matrix through integration of the posterior,
\begin{equation} \label{eq:intro:HXdag1}
    \int\Gamma_0^{1/2}\nabla G(x)^\top \Gamma^{-1}\nabla G(x)\Gamma_0^{1/2}\pi(x\mid y^\dagger)\,\mathrm dx,
\end{equation}
while~\cite{10.1137/15M1042127} considers data-dependence explicitly through the integrand, but integrates over $\pi(x)$:
\begin{equation} \label{eq:intro:HXdag2}
    \int \Gamma_0^{1/2}\nabla G(x)^\top \Gamma^{-1}(y^\dagger-G(x))(y^\dagger-G(x))^\top \Gamma^{-1}\nabla G(x)\Gamma_0^{1/2}\pi(x)\,\mathrm dx
\end{equation}
Finally,~\cite{zahm2022certified} considers data-dependence both in the integrand and through integrating over the posterior:
\begin{equation} \label{eq:intro:HXdag3}
    H_{\bar X\mid y^\dagger} \coloneqq \int \Gamma_0^{1/2}\nabla G(x)^\top \Gamma^{-1}(y^\dagger-G(x))(y^\dagger-G(x))^\top \Gamma^{-1}\nabla G(x)\Gamma_0^{1/2}\pi(x\mid y^\dagger)\,\mathrm dx.
\end{equation}
Again, $\bar U_r$ then consists of the $r$ leading eigenvectors of these matrices.

These formulations create two classes. The data-free likelihood-informed method (DF-LIS), using $H_{\bar X}$, is advantageous due to its data-independence; it can be computed once and then used for multiple problem instances with different~$y^\dagger$. It is also naturally simple to apply, as $\pi(x)$ is quick to sample from. On the other hand, the data-based methods (DB-LIS), using \eqref{eq:intro:HXdag1}, \eqref{eq:intro:HXdag2}, or \eqref{eq:intro:HXdag3}, involve information from $y^\dagger$ and may result in lower approximation errors for any one problem. Integrating over the posterior $\pi(x\mid y^\dagger)$, required for \eqref{eq:intro:HXdag1} and \eqref{eq:intro:HXdag3}, is in general challenging, as sampling from that distribution is our goal in the first place. Yet experiments~\cite{CuiTon22} show that $H_{\bar X\mid y^\dagger}$ achieves lower posterior error after reduction than \eqref{eq:intro:HXdag2}. 

\subsubsection{Output space.}
In the output space,~\cite{BapMarZah22_pre} proposes the data-independent diagnostic matrix
\begin{equation} \label{eq:intro:HY}
    H_{\bar Y} \coloneqq \int \Gamma^{-1/2}\nabla G(x)\Gamma_0\nabla G(x)^\top \Gamma^{-1/2}\pi(x)\,\mathrm dx,
\end{equation}
such that $\bar V_s$'s columns are the $s$ leading eigenvectors of $H_{\bar Y}$. In the output space, only a data-free method is known in the LIS literature. We will see later that there is no straightforward equivalent to the DB-LIS diagnostic matrix~\eqref{eq:intro:HXdag3} for output space reduction.

\section{Partially data-informed subspaces from tempered distributions} \label{sec:tempered}
This section contains our main theoretical contribution: $\alpha$-LIS, \emph{partially data-informed} subspaces based on samples from tempered distributions.

\subsection{Assumptions}
The core assumption in this section will be that certain \emph{subspace logarithmic Sobolev inequalities} hold. It is standard in the LIS literature to make such \cite{BapMarZah22_pre,zahm2022certified} or related \cite{CuiTon22} assumptions. We first define these inequalities and then state our assumptions.

\begin{definition}[Logarithmic Sobolev inequality]
    A random variable $Z$ satisfies the logarithmic Sobolev inequality if there exists a constant $C$ such that for all smooth, nonnegative functions $h$ with $\int h(z)\pi_Z(z)\,\mathrm dz = 1$,
    \begin{equation}
        \int h(z)\log h(z)\pi_Z(z)\,\mathrm dz \le \frac C2 \int h(z)\norm{\nabla_z \log h(z)}^2\,\pi_Z(z)\,\mathrm dz.
    \end{equation}
\end{definition}
\begin{definition}[Subspace logarithmic Sobolev inequality]
    A random variable $Z$ satisfies the subspace logarithmic Sobolev inequality if there exists a constant $\overline C$ such that, for any unitary matrix $[W_t, W_\perp]$ and any $z_\perp$, each of compatible dimensions, the conditional random variable $Z_t\mid Z_\perp = z_\perp$ (with $Z_t \coloneqq W_t^\top Z$ and $Z_\perp \coloneqq W_\perp^\top Z$) satisfies the logarithmic Sobolev inequality with $C=\overline C$.
\end{definition}
\begin{corollary} \label{cor:joint}
    Let a random variable $(X, Y)$ satisfy the subspace logarithmic Sobolev inequality with constant~$\overline C$. Then, for any unitary matrix $[V_s, V_\perp]$ and any $y_s$, each of compatible dimensions, the conditional random variable~${X \mid Y_s = y_s}$ (with $Y_s \coloneqq V_s^\top Y$) satisfies the logarithmic Sobolev inequality with $C=\overline C$.
\end{corollary}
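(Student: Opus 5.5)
The plan is to apply the definition of the subspace logarithmic Sobolev inequality directly to the joint variable $Z \coloneqq (X, Y)\in\mathbb R^{d_x+d_y}$. The key is to choose the unitary matrix $[W_t, W_\perp]$ so that $Z_t$ is exactly $X$ and $Z_\perp$ carries the conditioning information $Y_s$. Given a unitary $[V_s, V_\perp]$ of size $d_y\times d_y$, I will take the block matrices
\begin{equation*}
    W_t \coloneqq \begin{bmatrix} I_{d_x} & 0 \\ 0 & V_\perp \end{bmatrix}, \qquad W_\perp \coloneqq \begin{bmatrix} 0 \\ V_s \end{bmatrix}.
\end{equation*}
Their concatenation $[W_t, W_\perp]$ is a column permutation of $\mathrm{diag}(I_{d_x}, [V_\perp, V_s])$, and is therefore unitary. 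With this choice, $Z_\perp = W_\perp^\top Z = Y_s$ and $Z_t = (X, Y_\perp)$.

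By hypothesis, for every $y_s$ the conditional variable $(X, Y_\perp)\mid Y_s = y_s$ satisfies the logarithmic Sobolev inequality with constant $\overline C$. The remaining step is to show that this inequality passes to the marginal $X\mid Y_s=y_s$. Let $h$ be smooth, nonnegative, and normalized with respect to $\pi_{X\mid Y_s}(\cdot\mid y_s)$. I will define the lifted function $\tilde h(x, y_\perp)\coloneqq h(x)$. It is smooth and nonnegative, and it is normalized with respect to the joint conditional law of $(X, Y_\perp)$ given $Y_s=y_s$, because integrating out $y_\perp$ recovers the marginal $\pi_{X\mid Y_s}$. For the same reason, the left-hand side $\int \tilde h\log\tilde h$ under the joint conditional equals $\int h\log h\,\pi_{X\mid Y_s}$. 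On the right-hand side, $\nabla_{(x,y_\perp)}\log\tilde h = (\nabla_x\log h, 0)$, so its squared norm is $\norm{\nabla_x\log h}^2$, and integrating out $y_\perp$ again gives the marginal expression. The two sides of the inequality for $\tilde h$ are thus exactly the two sides for $h$ with constant $\overline C$, which is the claim.

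The main obstacle is bookkeeping rather than any real difficulty. Two points need care. The first is that the definition of the subspace inequality allows a $Z_t$ of arbitrary dimension, including a full block of $X$ together with $Y_\perp$, and also the edge cases $s=d_y$, where $Y_\perp$ is empty and the lifting is trivial, and $s=0$, where the conditioning is vacuous. The second is that conditioning on $Y_s=y_s$ is well defined in the sense of disintegration, so that Fubini's theorem can be used to integrate out $y_\perp$. I will state these regularity points briefly and not belabour them.
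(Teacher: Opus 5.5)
Your proposal is correct and follows the paper's proof. The paper likewise rewrites $\int h\log h\,\pi(x\mid y_s)\,\mathrm dx$ as an integral against $\pi(x,y_\perp\mid y_s)$, applies the subspace inequality to $(X,Y_\perp)$ conditioned on $Y_s=y_s$, and uses that the $y_\perp$-component of the gradient vanishes; your explicit block choice of $[W_t,W_\perp]$ spells out the step the paper leaves implicit.
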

\begin{proof}
    Define $Y_\perp \coloneqq V_\perp^\top Y$. With $\int h(x)\pi(x\mid y_s)\,\mathrm dx=1$, we verify that
    \begin{align*}
        \int h(x)\log h(x)\pi(x\mid y_s)\,\mathrm dx &= \iint h(x)\log h(x)\pi(x,y_\perp\mid y_s)\,\mathrm dx\,\mathrm dy_\perp\\
        &\le \int\frac{\overline C}2\int h(x)\norm{\nabla_{(x,y_\perp)}\log h(x)}^2\pi(x, y_\perp\mid y_s)\,\mathrm dx\,\mathrm dy_\perp\\
        &=\frac{\overline C}2 \int h(x)\norm{\nabla_x\log h(x)}^2\pi(x\mid y_s)\,\mathrm dx,
    \end{align*}
    which is exactly what we need to prove. The inequality above follows by the subspace logarithmic Sobolev inequality satisfied by $(X, Y)$, after verifying that
    \begin{equation}
        \iint h(x)\pi(x,y_\perp\mid y_s)\,\mathrm dx\,\mathrm dy_\perp = \int h(x)\pi(x\mid y_s)\,\mathrm dx = 1. \qedhere
    \end{equation}
\end{proof}
\begin{assumption} \label{ass:prior}
    The prior distribution $\pi_X$ satisfies the subspace logarithmic Sobolev inequality.
\end{assumption}
\begin{assumption} \label{ass:joint}
    The joint distribution $\pi_{X,Y}$ satisfies the subspace logarithmic Sobolev inequality.
\end{assumption}
\Cref{ass:prior} is used for deriving the reduced input space; \cref{ass:joint} is needed to find the reduced output space. We note that \cref{ass:joint} is also made in~\cite{BapMarZah22_pre} and implies \cref{ass:prior}. Both assumptions are satisfied for Gaussian prior~$\pi_X$ and linear forward model $G$. For more examples of distributions that satisfy these assumptions, we refer to~\cite{BapMarZah22_pre,zahm2022certified}.

\subsection{Definition of the new generalized subspaces}

Denote by
\begin{equation}
    \KL\mu\nu \coloneqq \int \mu(x)\log\frac{\mu(x)}{\nu(x)}\,\mathrm dx
\end{equation}
the Kullback--Leibler divergence of two distributions. The dimension reduction method of using the diagnostic matrix $H_{\bar X\mid y^\dagger}$ from \eqref{eq:intro:HXdag3} derives from upper-bounding $\KL{\pi(\cdot\mid y^\dagger)}{\pi^*(\cdot\mid y^\dagger)}$ and then finding a $\bar U_r$ that minimizes that upper bound. The data-independent $H_{\bar X}$ and $H_{\bar Y}$, on the other hand, minimize an upper bound on $\mathbb E_{y\sim\pi_Y}\left[\KL{\pi(\cdot\mid y)}{\pi^*(\cdot\mid y)}\right]$. In other words, the data-dependent approach tries to minimize the KL divergence of the true and reduced posterior distributions for the given observation, while the data-independent approach tries to minimize the \emph{average} KL divergence over all likely observations.

We will interpolate between these two objectives. Specifically, we will aim to minimize an upper bound on
\begin{equation} \label{eq:diag:E}
    \mathbb E_{y\sim p_\alpha}\left[\KL{\pi(\cdot\mid y)}{\pi^*(\cdot\mid y)} \right], 
\end{equation}
where we define a family of distributions $p_\alpha$, parameterized by a scalar $\alpha\in[0, 1]$:
\begin{equation}
    p_\alpha(y) \propto \pi(y)f_\mathrm{gauss}\left(y\, ; y^\dagger, \frac{1-\alpha}\alpha\Gamma\right).
\end{equation}
Here $f_\mathrm{gauss}(\,\cdot\,; m, \Sigma)$ denotes the PDF of a Gaussian distribution with mean $m$ and covariance $\Sigma$. The boundaries are defined by the natural extensions of the limit of $\alpha$:
\begin{equation}
\begin{cases}
    p_0(y)\coloneqq\lim_{\alpha\to 0}p_\alpha(y) = \pi(y),\\
    p_1(y)\coloneqq\lim_{\alpha\to 1}p_\alpha(y) = \delta_{y^\dagger}(y).
\end{cases}
\end{equation}
As $\alpha$ increases from $0$ towards $1$, the objective \eqref{eq:diag:E} evolves from an averaged KL divergence over all likely observations towards the KL divergence for one specific observation.

To find good reduced spaces, we will first upper-bound \eqref{eq:diag:E} and then find the spaces that minimize that upper bound. The full derivation is deferred to \cref{sec:tempered-deriv}; here, we only define the resulting spaces.
\begin{definition}[$\alpha$-LIS]
    For a given value of $\alpha\in[0, 1]$, the $\alpha$-LIS method computes the dimension-$r$ reduced input subspace as the $r$ leading eigenvectors of the diagnostic matrix
    \begin{equation} \label{eq:tempered:HX}
        H_{X\mid y^\dagger}^\alpha \coloneqq \int\nabla G(x)^\top \Gamma^{-1}\left((1-\alpha)\Gamma + \alpha^2(y^\dagger-G(x))(y^\dagger-G(x))^\top \right)\Gamma^{-1}\nabla G(x)\pi^{(\alpha)}(x\mid y^\dagger)\,\mathrm dx.
    \end{equation}

    The dimension-$s$ reduced output space is not computed through a diagnostic matrix. Instead, it is defined as the minimizer for $V_s\in\mathbb R^s$ with $V_s^\top V_s=I$ of 
    \begin{equation} \label{eq:tempered:J}
        J(V_s) \coloneqq \Tr\left[\int\nabla G(x)^\top P_s\left((1-\alpha)\Gamma + \alpha^2(y^\dagger-G(x))(y^\dagger-G(x))^\top \right)P_s\nabla G(x)\pi^{(\alpha)}(x\mid y^\dagger)\,\mathrm dx\right]
    \end{equation}
    with $P_s = \Gamma^{-1} - V_s(V_s^\top \Gamma V_s)^{-1}V_s^\top$.
\end{definition}

\begin{remark}[$\alpha$-LIS as generalization of DF-LIS and DB-LIS] \label{rem:tempered:out-α=0-diagnostic}
    In the case $\alpha=0$, the $\alpha$-LIS input diagnostic matrix matches the diagnostic matrix from~\cite{BapMarZah22_pre} when both are applied after whitening (see \cref{rem:intro:white}): $H_{\bar X\mid y^\dagger}^0 = H_{\bar X}$. For $\alpha=1$, we have $H_{\bar X\mid y^\dagger}^1 = H_{\bar X\mid y^\dagger}$. For $0<\alpha<1$, $H_{\bar X\mid y^\dagger}^\alpha$ interpolates non-linearly between these two diagnostic matrices.

    For the output space, consider the case where $\alpha=0$ and whitening has been applied according to \cref{rem:intro:white}, such that $\Gamma=I$. First note that $P_s = V_\perp(V_\perp^\top\Gamma V_\perp)^{-1}V_\perp^\top$ and so, this becomes $P_s=V_\perp V_\perp^\top$, such that $P_sP_s=P_s$. Applying this and the cyclic invariance of the trace (i.e., $\Tr[AB]=\Tr[BA]$ for all $A$ and $B$), $J(V_s)$ becomes
    \begin{equation}
        \frac{\bar C}2\Tr\left[\int\nabla G(x)^\top P_sP_s\nabla G(x)\pi(x)\,\mathrm dx\right] = \frac{\bar C}2\Tr\left[V_\perp^\top\left\{\int\nabla G(x)\nabla G(x)^\top \pi(x)\,\mathrm dx\right\}V_\perp\right];
    \end{equation}
    thus we recover the diagnostic matrix $H_{\bar Y}$, given in~\eqref{eq:intro:HY}, as in~\cite{BapMarZah22_pre}. 
\end{remark}

In the case for $\alpha>0$, there is no such formulation where the optimal subspace is the leading eigenspace of some diagnostic matrix. We will need to find or approximate the minimizer of $J$ in some other way, which is studied in \cref{sec:opt}.

\subsection{Accumulated posterior information} \label{sec:tempered:acc}
When using tempered algorithms, one will not have access to \emph{only} samples $\alpha_k$, when $k>0$. Instead, samples are generated from the entire sequence $0=\alpha_0< \dots<\alpha_k \leq 1$ representing accumulated information over $[\alpha_0,\alpha_k]$. In the infinite-sample limit, we are motivated to only use the samples of the final distribution $\alpha_k$, but with a finite number of samples we propose that it is likely helpful to utilize all available samples.  A simple approach is to minimize an upper bound on
\begin{equation}
    \int_{\alpha_0}^{\alpha_k}\mathbb E_{y\sim p_\alpha}\left[\KL{\pi(\cdot\mid y)}{\pi^*(\cdot\mid y)} \right]\,\mathrm d\alpha
\end{equation}
instead of \eqref{eq:diag:E}. This results in an accumulated diagnostic matrix
\begin{equation}
    H_{X\mid y^\dagger}^{[\alpha_0,\alpha_k]} = \int_{\alpha_0}^{\alpha_k} H_{X\mid y^\dagger}^\alpha \,\mathrm{d}\alpha
\end{equation}
and an objective function for the output space that equals $J$ integrated over $[\alpha_0, \alpha_k]$. In practice, we will approximate these integrals with quadrature based on the $\{\alpha_i\}_{i=0}^k$ available. Note that even for linear and Gaussian problems, $H_{X\mid y^\dagger}^{[0,1]} \neq H_{X\mid y^\dagger}^{0.5}$ in general.

There are possibly several benefits of working with accumulated information. As mentioned, it can use all the available sample information in a situation where the number of samples is heavily limited. Also, accumulation widens the support of the information, and when using an emulator for sampling, it has been shown that wider-than-posterior distributions may be preferred as training data~\cite[Theorem 3]{helin2023introductiongaussianprocessregression}. Such behavior might benefit situations using the posterior in other ways than to compute integrals. For example, extremal regions could be explored for classifying the tails and may be lost under only a near-posterior analysis.

\subsection{Statistical linearization} \label{sec:tempered:gradfree}
In contrast to PCA- and CCA-based dimension reduction, (partially) data-informed techniques require gradients $\nabla G$ of the forward map. In many practical settings one either does not have access to these derivatives, or they are too corrupted by noise to be useful.

To address this challenge, we use the available set of model evaluations $\{ G(x_j)\}_{j=1}^J$, sampled from the $\alpha$-tempered posterior. Consider the empirical covariances
\begin{align*}
    C^{xx} &= \frac{1}{J}\sum_{j=1}^J \left[(x_j - \overline{x})\otimes(x_j - \overline{x}) \right],\qquad &\overline{x} &= \frac{1}{J}\sum_{j=1}^J x_j,\\
    C^{xG} &= \frac{1}{J}\sum_{j=1}^J \Big[ (x_j - \overline{x})\otimes(G(x_j) - \overline{G(x)}) \Big], \qquad&\overline{G(x)} &= \frac{1}{J} \sum_{j=1}^J G(x_j) ,
\end{align*}
similarly to those used in \cref{ex:eki}. Gradient approximations can then be obtained by the \emph{statistical linearization} (SL) approach \cite{ChaCheSan20,ungaralaIteratedFormsKalman2012,CalReiStu25}:
\begin{equation}
\nabla G(\cdot)\approx \bigl[(C^{xx})^\dagger C^{xG}\bigr]^\top.
\end{equation}
The SL form provides a single constant derivative (the linear approximation of $\nabla G(x)$ over the set of evaluations). This also makes $H_{X\mid y^\dagger}^\alpha$ and $J$ cheaper to compute.

Other derivative-free dimension reduction methods exist: for example, local finite-difference approximation~\cite{ConGlei14}, denoising score functions~\cite{BapBreMar24_pre}, and related approaches such as covariance-informed subspaces~\cite{PolMaiSocGes26_pre} are natively derivative-free. However, our extensions offer simple nonlocal derivative approximations that are not yet explored by this literature.

\section{Results} \label{sec:results}
Here we perform numerical experiments with the dimension reduction techniques discussed in this paper. We first give an overview of the test problems we consider in this section. In all our experiments, we solve the optimization problem for the reduced output space by the incremental strategy of \cref{sec:opt:incr}, as motivated by the experiments of \cref{sec:opt:comp}.

\subsection{Overview of the models}
\paragraph{Ablations in a linear problem: testing the theory.}
We start by considering problems where $G(x) = Ax$, with $A \in \mathbb R^{d_y\times d_x}$ and~$d_x\ge d_y$. Specifically, we choose
\begin{equation}
    A = U\Lambda V^\top ,
\end{equation}
where $U\in\mathbb R^{d_y\times d_y}$ and $V\in\mathbb R^{d_x\times d_y}$ are random orthogonal matrices and $\Lambda = 100\cdot\operatorname{diag}\left(\{1^{-1}, 2^{-1}, \ldots, d_y^{-1}\}\right)$. The prior distribution is $\pi_X = \mathcal N(0, \Gamma_0)$ with $\Gamma_0 = \gamma_0\cdot\operatorname{diag}\left(\{1^{-2}, 2^{-2}, \ldots, d_x^{-2}\}\right)$ and $\gamma_0=4$, and the noise covariance equals $\Gamma=I$. Unless mentioned otherwise, we use an observation
\begin{equation} \label{eq:res:linear:ydagger}
    y^\dagger = G(x^\dagger) + \eta^\dagger,
\end{equation}
where $x^\dagger \sim \mathcal N(0, \Gamma_0)$ and $\eta^\dagger \sim \mathcal N(0, \Gamma)$ are randomly sampled.

The main advantage of studying linear models first is that the resulting posteriors, even after dimension reduction, are available in closed form. For the full posterior, this is straightforward:
\begin{equation}
    \pi_{X\mid Y}(\cdot \mid y^\dagger) = \mathcal N(\mu, \Sigma) \qquad \text{with} \qquad \Sigma = (\Gamma_0^{-1} + A^\top \Gamma^{-1}A)^{-1} \qquad \text{and} \qquad \mu = \Sigma A^\top  \Gamma^{-1} y^\dagger.
\end{equation}
For the reduced posterior, consider $(U_r, V_s)$ computed by some dimension reduction method (based on samples of $\pi(x)$ or $\pi(x \mid y^\dagger)$, for instance). We can first compute the likelihood \eqref{eq:bip:pi*} as a convolution of two multivariate Gaussians:
\begin{equation}
\begin{aligned}
    \pi^*_{Y\mid X}(y\mid x) &\propto \int \exp\left(-\frac12\norm{y_s - V_s^\top A(U_rx_r + U_\perp x_\perp)}_{V_s^\top\Gamma V_s}^2\right)\exp\left(-\frac12\norm{x_r - \mu_\perp}_S^2\right)\,\mathrm dx_\perp\\
    &\propto \int f_\mathrm{gauss}\left(y_s - V_s^\top AU_rx_r; V_s^\top AU_\perp x_\perp, V_s^\top\Gamma V_s\right) f_\mathrm{gauss}\left(x_\perp; \mu_\perp, S\right)\,\mathrm dx_\perp\\
    &\propto f_\mathrm{gauss}(y_s; Bx_r, C)
\end{aligned}
\end{equation}
with
\begin{equation}
    B = V_s^\top A[U_r + U_\perp (U_\perp^\top\Gamma_0U_r)(U_r^\top\Gamma_0U_r)^{-1}] \qquad \text{and} \qquad C = V_s^\top\Gamma V_s + (V_s^\top A U_\perp)S(U_\perp^\top A^\top V_s),
\end{equation}
where we used that $\bigl[\begin{smallmatrix}
    X_r\\X_\perp
\end{smallmatrix}\bigr]\sim\mathcal N\bigl(0, \bigl[\begin{smallmatrix}
    U_r^\top \Gamma_0U_r & U_r^\top \Gamma_0U_\perp\\
    U_\perp^\top \Gamma_0U_r & U_\perp^\top \Gamma_0U_\perp
\end{smallmatrix}\bigr]\bigr)$ and, thus, $X_\perp \mid X_r = x_r \sim \mathcal N(\mu_\perp, S)$ with
\begin{equation}
    \mu_\perp = (U_\perp^\top \Gamma_0U_r)(U_r^\top \Gamma_0U_r)^{-1}x_r \qquad \text{and} \qquad S = U_\perp^\top \Gamma_0U_\perp - (U_\perp^\top \Gamma_0U_r)(U_r^\top \Gamma_0U_r)^{-1}(U_r^\top \Gamma_0U_\perp).
\end{equation}
Combining this with the prior, we find
\begin{equation}
    \pi^*_{X\mid Y}(\cdot\mid y^\dagger) = \mathcal N(\mu^*, \Sigma^*)
\end{equation}
with
\begin{equation}
    \Sigma^* = (\Gamma_0^{-1} + U_rB^\top C^{-1}BU_r^\top)^{-1} \qquad \text{and} \qquad \mu^* = \Sigma^*(U_rB^\top C^{-1}V_s^\top y).
\end{equation}
We can then use a closed-form expression for the Wasserstein-2 distance between $\pi$ and $\pi^*$:
\begin{equation}
    W_2(\pi, \pi^*) = \norm{\mu-\mu^*}^2 + \Tr\Bigl[\Sigma+\Sigma^* - 2\sqrt{\sqrt\Sigma\,\Sigma^*\,\sqrt\Sigma}\,\Bigr].
\end{equation}

\paragraph{Linear-exponential model.}
We also consider a nonlinear problem: the linear-exponential set-up from~\cite{CuiTon22}. We have
\begin{equation}
    G(x) = A \exp(x),
\end{equation}
with the exponential applied elementwise and where $A$, the prior distribution $\Gamma_0 = \gamma_0\cdot\operatorname{diag}\left(\{1^{-2}, 2^{-2}, \ldots, d_x^{-2}\}\right)$ with (by default) $\gamma_0=4$, and the noise distribution are formed in the same way as in the linear problem. The observation, similarly, is $y^\dagger = G(x^\dagger) + \eta^\dagger$ with $x^\dagger\sim\mathcal N(0,\Gamma_0)$ and $\eta^\dagger\sim\mathcal N(0,\Gamma)$.

For this problem, we have no closed-form formula for the Wasserstein-2 distance. In addition, it is very hard to approximate in high dimensions. Instead, we will consider the squared Hellinger distance
\begin{equation}
    \operatorname{hell}^2(\pi, \pi^*) \coloneqq 1 - \int\sqrt{\pi(x)\pi^*(x)}\,\mathrm dx,
\end{equation}
which measures distribution overlap and therefore heavily punishes even small deviations between distributions. However, it is more tractable to approximate. Noting that we only have access to unnormalized distributions $\widetilde\pi = Z\pi$ and $\widetilde\pi^* = Z^*\pi^*$, we draw samples $\{x^{*,i}\}_{i=1}^n$ from $\pi^*$ with Markov chain Monte Carlo and approximate, using \emph{self-normalized importance sampling},
\begin{equation}
\begin{aligned}
    \int\sqrt{\pi(x)\pi^*(x)}\,\mathrm dx &= \frac1{\sqrt{ZZ^*}}\int\sqrt{\widetilde\pi(x)\widetilde\pi^*(x)}\,\mathrm dx = \sqrt{\frac{Z^*}{Z}}\int\sqrt{\frac{\widetilde\pi(x)}{\widetilde\pi^*(x)}}\,\pi^*(x)\,\mathrm dx\\
    &= \frac{\int\sqrt{\frac{\widetilde\pi(x)}{\widetilde\pi^*(x)}}\,\pi^*(x)\,\mathrm dx}{\sqrt{\int\frac{\widetilde\pi(x)}{\widetilde\pi^*(x)}\,\pi^*(x)\,\mathrm dx}} \approx \frac{\frac1n\sum_{i=1}^n\sqrt{\frac{\widetilde\pi(x^{*,i})}{\widetilde\pi^*(x^{*,i})}}}{\sqrt{\frac1n\sum_{i=1}^n\frac{\widetilde\pi(x^{*,i})}{\widetilde\pi^*(x^{*,i})}}}.
\end{aligned}
\end{equation}

\paragraph{Darcy flow with spatially varying permeability.}
Flow through a porous medium is a standard test case among Bayesian inverse problems. We consider the spatial domain $\Omega=[0,1]\times[0,1]$ and the PDE
\begin{subequations} \label{eq:res:darcy:pde}
\begin{align}
    -\nabla\cdot\bigl(a(\xi,u)\nabla p(\xi)\bigr) &= c, &&\xi\in\Omega,\\
    p(\xi) &= 0, &&\xi\in\partial\Omega,
\end{align}
\end{subequations}
which models the pressure field $p$ for some constant fluid source $c$. This field depends on the permeability field~$a$, which we will parameterize by an unknown vector $u\in \mathbb R^{d_x}$, which will correspond to the vector $x$ to be learned in the inverse problem.

Based on $d_y$ measurements of $p$, equispaced throughout $\Omega$, we want to find out~$a$. We model
\begin{equation} \label{eq:res:darcy:a}
    a(\xi, u) = \exp\left(\sum\nolimits_{k\in K} u_k\sqrt{\lambda_k}\,\phi_k(\xi)\right),
\end{equation}
where $\lambda_k = (\pi^2\norm k_2^2+\tau^2)^{-s}$ with $(\tau,s)=(3,2)$, $K$ contains the $d_x$ elements of $\mathbb N^2\backslash\{(0,0)\}$ with largest $\lambda_k$, and $\phi_k(\xi) = c_k\cos(\pi k_1\xi_1)\cos(\pi k_2\xi_2)$ with $c_k=\sqrt2$ if either $k_1=0$ or $k_2=0$, and $c_k=2$ otherwise. The prior is $u\sim\mathcal N(0,I)$. This corresponds to modeling $a$ as a log-normal random field and taking a truncated Karhunen--Lo\`eve expansion (see, e.g.,~\cite{huangIteratedKalmanMethodology2022c}). Combined, we solve a Bayesian inverse problem where $G$ first maps $u$ to $a$ through \eqref{eq:res:darcy:a}, then solves \eqref{eq:res:darcy:pde} with central finite differences on a grid with stepsize $h=2^{-5}$, and finally measures $p$. We model the observational noise covariance as $\Gamma=10^{-4}I$. An observation $y^\dagger$ is constructed by sampling $u$ from the prior, solving the PDE with finer stepsize $2^{-7}$, and adding a sample from the noise distribution.

Similarly to the linear-exponential model above, we measure the difference between true and reduced posteriors by estimating the squared Hellinger distance.

\subsection{Reference setting}
The first set of experiments uses our algorithms as detailed in this paper to study their theoretical performance. In the next subsection, we will approximate various aspects of the dimension reduction methods to make them more suitable for realistic applications.

Integrals over a tempered or posterior distribution are approximated with Markov chain Monte Carlo methods performed in the full space.

\begin{figure}
    \centering
        \begin{tikzpicture}
        \begin{groupplot}[group style={{group size=2 by 2, horizontal sep=2cm, vertical sep=2cm}},name=t2, height=0.37\textwidth, width=0.53\textwidth,legend style={transpose legend,legend columns=0,draw=none}]
            \nextgroupplot[xshift=0.325\textwidth, xlabel={$r=s$}, ylabel={$W_2$}, legend to name=grouplegend1, xmode=log, log basis x=2, ymode=log, grid=major, ymin=0.5e-7, ymax=20, title={Input and output space reduction}, every axis plot/.append style={thick, mark options={solid, scale=0.9}}]
            \addplot[winered, mark=triangle] table [x=dim, y=err_PCA_PCA, col sep=comma] {data/linear-simple/W2-100-100-true-true.csv};
            \addlegendentry{$\mathrm{PCA}\;\;$}
            \addplot[black, mark=x] table [x=dim, y={err_α=0.0_α=0.0}, col sep=comma] {data/linear-simple/W2-100-100-true-true.csv};
            \addlegendentry{$\alpha=0\;\;$}
            \addplot[black, mark=o] table [x=dim, y={err_α=0.5_α=0.5 (incr.)}, col sep=comma] {data/linear-simple/W2-100-100-true-true.csv};
            \addlegendentry{$\alpha=0.5\;\;$}
            \addplot[black, mark=square] table [x=dim, y={err_α=1.0_α=1.0 (incr.)}, col sep=comma] {data/linear-simple/W2-100-100-true-true.csv};
            \addlegendentry{$\alpha=1\;\;$}

            \nextgroupplot[group/empty plot]
    
            \nextgroupplot[xshift=-0cm, xshift=-0.175\textwidth, xlabel={$r$}, ylabel={$W_2$}, legend to name=grouplegend1, xmode=log, log basis x=2, ymode=log, grid=major, ymin=0.5e-7, ymax=20, title={Input space reduction}, every axis plot/.append style={thick, mark options={solid, scale=0.9}}]
            \addplot[winered, mark=triangle] table [x=dim, y=err_PCA_PCA, col sep=comma] {data/linear-simple/W2-100-100-true-false.csv};
            \addlegendentry{$\mathrm{PCA}\;\;$}
            \addplot[black, mark=x] table [x=dim, y={err_α=0.0_α=0.0}, col sep=comma] {data/linear-simple/W2-100-100-true-false.csv};
            \addlegendentry{$\alpha=0\;\;$}
            \addplot[black, mark=o] table [x=dim, y={err_α=0.5_α=0.5 (incr.)}, col sep=comma] {data/linear-simple/W2-100-100-true-false.csv};
            \addlegendentry{$\alpha=0.5\;\;$}
            \addplot[black, mark=square] table [x=dim, y={err_α=1.0_α=1.0 (incr.)}, col sep=comma] {data/linear-simple/W2-100-100-true-false.csv};
            \addlegendentry{$\alpha=1\;\;$}
        
            \nextgroupplot[xshift=-0cm, xshift=-0.275\textwidth, xlabel={$s$}, legend to name=grouplegend1, xmode=log, log basis x=2, ymode=log, grid=major, legend to name=grouplegend2, ymin=0.5e-7, ymax=20, title={Output space reduction}, every axis plot/.append style={thick, mark options={solid, scale=0.9}}, yticklabels={}]
            \addplot[winered, mark=triangle] table [x=dim, y=err_PCA_PCA, col sep=comma] {data/linear-simple/W2-100-100-false-true.csv};
            \addlegendentry{$\mathrm{PCA}\;\;$}
            \addplot[black, mark=x] table [x=dim, y={err_α=0.0_α=0.0}, col sep=comma] {data/linear-simple/W2-100-100-false-true.csv};
            \addlegendentry{$\alpha=0\;\;$}
            \addplot[black, mark=o] table [x=dim, y={err_α=0.5_α=0.5 (incr.)}, col sep=comma] {data/linear-simple/W2-100-100-false-true.csv};
            \addlegendentry{$\alpha=0.5\;\;$}
            \addplot[black, mark=square] table [x=dim, y={err_α=1.0_α=1.0 (incr.)}, col sep=comma] {data/linear-simple/W2-100-100-false-true.csv};
            \addlegendentry{$\alpha=1\;\;$}
        \end{groupplot}
    
        \node at (8.5,0)[below, yshift=-20\pgfkeysvalueof{/pgfplots/every axis title shift}]{\ref*{grouplegend2}};
    \end{tikzpicture}
    \caption{Wasserstein-2 posterior errors for different algorithms on the linear test problem}
    \label{fig:res:linear-simple}
\end{figure}

\paragraph{General comparison: linear problem.} We first perform a general comparison of various dimension reduction methods: we compare PCA to likelihood-informed reduction with $\alpha\in\{0,0.5,1\}$ on the linear problem with~${d_x=d_y=100}$.

The result is shown in \cref{fig:res:linear-simple}. PCA performs worst, which can be expected since the spectrum of $\Lambda$ is relatively spread out. Data-free dimension reduction with $\alpha=0$ performs somewhat better, but $\alpha=0.5$ and $\alpha=1$ heavily outperform the other methods. Interestingly, $\alpha=0.5$ is roughly as good as $\alpha=1$. This is valuable in cases where tempered samples with $\alpha=0.5$ are cheaper to obtain than full posterior samples, or where the tempered samples are already available (such as in \cref{sec:ces} later).

\begin{figure}
    \centering
        \begin{tikzpicture}
        \begin{groupplot}[group style={{group size=2 by 1, horizontal sep=2cm, vertical sep=2cm}},name=t2,, height=0.37\textwidth, width=0.53\textwidth,legend style={transpose legend,legend columns=0,draw=none}]
            \nextgroupplot[xlabel={$\alpha$}, ylabel={$W_2$}, legend to name=grouplegend1, ymode=log, grid=major, legend to name=grouplegend, unbounded coords=discard, every axis plot/.append style={thick, mark options={solid, scale=0.9}}, title={Input space reduction ($r=d$, $s=d_y$)}, ymin = 4e-7, ymax = 2e-1]
            \addplot[color=ferngreen, solid, mark=o] table [x=alpha, y=err_incr_500_in, col sep=comma]{data/linear-alpha/W2-10_summary_50.csv};
            \addlegendentry{$d=10\;\;$}
            \addplot[color=winered, solid, mark=o] table [x=alpha, y=err_incr_500_in, col sep=comma]{data/linear-alpha/W2-20_summary_50.csv};
            \addlegendentry{$d=20\;\;$}
            \addplot[color=orientblue, solid, mark=o] table [x=alpha, y=err_incr_500_in, col sep=comma]{data/linear-alpha/W2-40_summary_50.csv};
            \addlegendentry{$d=40$}

            \nextgroupplot[xshift=-.1\textwidth, xlabel={$\alpha$}, legend to name=grouplegend1, ymode=log, grid=major, legend to name=grouplegend, unbounded coords=discard, every axis plot/.append style={thick, mark options={solid, scale=0.9}}, title={Output space reduction ($r=d_x$, $s=d$)}, ymin = 4e-7, ymax = 2e-1, yticklabels={}]
            \addplot[color=ferngreen, solid, mark=o] table [x=alpha, y=err_incr_500_out, col sep=comma]{data/linear-alpha/W2-10_summary_50.csv};
            \addlegendentry{$d=10\;\;$}
            \addplot[color=winered, solid, mark=o] table [x=alpha, y=err_incr_500_out, col sep=comma]{data/linear-alpha/W2-20_summary_50.csv};
            \addlegendentry{$d=20\;\;$}
            \addplot[color=orientblue, solid, mark=o] table [x=alpha, y=err_incr_500_out, col sep=comma]{data/linear-alpha/W2-40_summary_50.csv};
            \addlegendentry{$d=40$}
        \end{groupplot}
    
        \node at (6,0)[below, yshift=-2\pgfkeysvalueof{/pgfplots/every axis title shift}]{\ref*{grouplegend}};
    \end{tikzpicture}
    \caption{Wasserstein-2 posterior errors for the linear test problem in function of $\alpha$}
    \vspace{1cm}
    \label{fig:res:linear-alpha}
\end{figure}

\paragraph{Study of the effect of $\alpha$.}
\Cref{fig:res:linear-alpha} expands on the earlier observation that $\alpha=0.5$ can rival $\alpha=1$. Keeping the parameters of the previous experiments, for each of $r=s\in\{10, 20, 40\}$, we plot the posterior error as a function of $\alpha$. To deal with fluctuations due to the manifold optimization, we plot the median error over 32 different realizations of the problem. While a larger $\alpha$ mostly gives better errors, choosing $\alpha$ (slightly) below~$1$ seems beneficial, perhaps due to increased robustness in the presence of Monte Carlo errors when approximating $H_{X\mid y^\dagger}^\alpha$ and $J$. This effect seems to grow more pronounced as the reduced dimension $r=s$ increases.

\begin{figure}
    \centering
    \begin{tikzpicture}
        \begin{groupplot}[group style={group size=1 by 1, horizontal sep=2cm, vertical sep=2cm}, height=0.37\textwidth, width=0.53\textwidth, legend style={transpose legend, legend columns=3, draw=none}]
            \nextgroupplot[name=mainplot, xmode=log, log basis x=2, ymode=log, grid=major, xlabel={$r=s$}, ylabel={$W_2$}, legend to name=grouplegend, unbounded coords=discard, every axis plot/.append style={thick, mark options={solid, scale=0.9}}]
            \addplot[color=black, mark=x] table [x=dim, y={err_α=0.0_α=0.0}, col sep=comma]{data/linear-ood/real-W2.csv};
            \addlegendentry{$\alpha=0\;\;\;\;$}
            \addplot[color=black, mark=o] table [x=dim, y={err_α=0.5_α=0.5 (incr.)}, col sep=comma]{data/linear-ood/real-W2.csv};
            \addlegendentry{$\alpha=0.5\;\;\;\;$}
            \addplot[color=black, mark=square] table [x=dim, y={err_α=1.0_α=1.0 (incr.)}, col sep=comma]{data/linear-ood/real-W2.csv};
            \addlegendentry{$\alpha=1.0\;\;\;\;$}
            \addplot[color=winered, mark=x] table [x=dim, y={err_α=0.0_α=0.0}, col sep=comma]{data/linear-ood/ood-W2.csv};
            \addlegendentry{$\alpha=0\mathrm{\;(OOD)}$}
            \addplot[color=winered, mark=o] table [x=dim, y={err_α=0.5_α=0.5 (incr.)}, col sep=comma]{data/linear-ood/ood-W2.csv};
            \addlegendentry{$\alpha=0.5\mathrm{\;(OOD)}$}
            \addplot[color=winered, mark=square] table [x=dim, y={err_α=1.0_α=1.0 (incr.)}, col sep=comma]{data/linear-ood/ood-W2.csv};
            \addlegendentry{$\alpha=1.0\mathrm{\;(OOD)}$}
        \end{groupplot}
    
        \node at (mainplot.south) [below, yshift=-3\pgfkeysvalueof{/pgfplots/every axis title shift}] {\ref*{grouplegend}};
    \end{tikzpicture}
    \caption{Wasserstein-2 posterior errors for the linear test problem in function of the reduced input and output dimensions $r=s$, with either an out-of-distribution (``OOD'') or a typical observation}
    \label{fig:res:linear-ood}
\end{figure}

\paragraph{``Typical'' versus ``out-of-distribution'' observations.} Here we compare the observation \eqref{eq:res:linear:ydagger}, which tends to have a large probability density under the prior, to the observation
\begin{equation}
    y^\dagger = G\left(\begin{bmatrix}
        10\\\vdots\\10
    \end{bmatrix}\right) + 10\begin{bmatrix}
        \eta_1\\\vdots\\\eta_{d_y}
    \end{bmatrix},
\end{equation}
where $\eta_i\sim\mathcal N(0, 1)$, which is highly improbable under the prior.

Since $\alpha=0$ targets the expected KL divergence over the evidence $\pi_Y$, while $\alpha=1$ fully specializes on the observation $y^\dagger$, we expect larger $\alpha$s to outperform smaller ones more significantly on the out-of-distribution observation than on the typical one. \Cref{fig:res:linear-ood} confirms this, on the linear test problem using $d_x=d_y=100$. For the out-of-distribution observation, $\alpha=0$ and $\alpha=0.5$ both perform poorly.

\paragraph{General comparison: linear-exponential problem.} We repeat the earlier experiment comparing different dimension reduction methods for the linear-exponential problem with $d_x=d_y=100$.

\Cref{fig:res:linexp-simple} paints a similar picture to the linear test problem: the choice $\alpha\in\{0.5, 1\}$ performs significantly better than $\alpha=0$, which in turn is better than PCA. We note two differences: \begin{itemize}
    \item PCA appears much worse than before in the input space, achieving a Hellinger distance of around $1$ for any $r$ considered. This is primarily due to the error metric chosen: the Hellinger distance is much more punishing for relatively large posterior differences than the Wasserstein-2 distance we use for the linear problem.
    \item In the output space, $\alpha=0.5$ outperforms $\alpha=1$, instead of being comparable as in the linear problem.
\end{itemize}

\begin{figure}
    \centering
        \begin{tikzpicture}
        \begin{groupplot}[group style={{group size=2 by 2, horizontal sep=2cm, vertical sep=2cm}},name=t2, height=0.37\textwidth, width=0.53\textwidth,legend style={transpose legend,legend columns=0,draw=none}]
            \nextgroupplot[xshift=0.325\textwidth, xlabel={$r=s$}, ylabel={$\operatorname{hell}^2$}, legend to name=grouplegend1, xmode=log, log basis x=2, ymode=log, grid=major, ymin=5e-7, ymax=2, title={Input and output space reduction}, every axis plot/.append style={thick, mark options={solid, scale=0.9}}]

            \addplot[winered, mark=triangle] table [x=dim, y=err_PCA_PCA, col sep=comma] {data/linexp-simple/errors-true-true.csv};
            \addlegendentry{$\mathrm{PCA}\;\;$}
            \addplot[black, mark=x] table [x=dim, y={err_α=0.0_α=0.0}, col sep=comma] {data/linexp-simple/errors-true-true.csv};
            \addlegendentry{$\alpha=0\;\;$}
            \addplot[black, mark=o] table [x=dim, y={err_α=0.5_α=0.5}, col sep=comma] {data/linexp-simple/errors-true-true.csv};
            \addlegendentry{$\alpha=0.5\;\;$}
            \addplot[black, mark=square] table [x=dim, y={err_α=1.0_α=1.0}, col sep=comma] {data/linexp-simple/errors-true-true.csv};
            \addlegendentry{$\alpha=1\;\;$}

            \nextgroupplot[group/empty plot]
    
            \nextgroupplot[xshift=-0cm, xshift=-0.175\textwidth, xlabel={$r$}, ylabel={$\operatorname{hell}^2$}, legend to name=grouplegend1, xmode=log, log basis x=2, ymode=log, grid=major, ymin=5e-7, ymax=2, title={Input space reduction}, every axis plot/.append style={thick, mark options={solid, scale=0.9}}]
            \addplot[winered, mark=triangle] table [x=dim, y=err_PCA_PCA, col sep=comma] {data/linexp-simple/errors-true-false.csv};
            \addlegendentry{$\mathrm{PCA}\;\;$}
            \addplot[black, mark=x] table [x=dim, y={err_α=0.0_α=0.0}, col sep=comma] {data/linexp-simple/errors-true-false.csv};
            \addlegendentry{$\alpha=0\;\;$}
            \addplot[black, mark=o] table [x=dim, y={err_α=0.5_α=0.5}, col sep=comma] {data/linexp-simple/errors-true-false.csv};
            \addlegendentry{$\alpha=0.5\;\;$}
            \addplot[black, mark=square] table [x=dim, y={err_α=1.0_α=1.0}, col sep=comma] {data/linexp-simple/errors-true-false.csv};
            \addlegendentry{$\alpha=1\;\;$}
        
            \nextgroupplot[xshift=-0cm, xshift=-0.275\textwidth, xlabel={$s$}, legend to name=grouplegend1, xmode=log, log basis x=2, ymode=log, grid=major, legend to name=grouplegend2, ymin=5e-7, ymax=2, title={Output space reduction}, every axis plot/.append style={thick, mark options={solid, scale=0.9}}, yticklabels={}]
            \addplot[winered, mark=triangle] table [x=dim, y=err_PCA_PCA, col sep=comma] {data/linexp-simple/errors-false-true.csv};
            \addlegendentry{$\mathrm{PCA}\;\;$}
            \addplot[black, mark=x] table [x=dim, y={err_α=0.0_α=0.0}, col sep=comma] {data/linexp-simple/errors-false-true.csv};
            \addlegendentry{$\alpha=0\;\;$}
            \addplot[black, mark=o] table [x=dim, y={err_α=0.5_α=0.5}, col sep=comma] {data/linexp-simple/errors-false-true.csv};
            \addlegendentry{$\alpha=0.5\;\;$}
            \addplot[black, mark=square] table [x=dim, y={err_α=1.0_α=1.0}, col sep=comma] {data/linexp-simple/errors-false-true.csv};
            \addlegendentry{$\alpha=1\;\;$}
        \end{groupplot}
    
        \node at (8.5,0)[below, yshift=-20\pgfkeysvalueof{/pgfplots/every axis title shift}]{\ref*{grouplegend2}};
    \end{tikzpicture}
    \caption{Hellinger posterior errors for different algorithms on the linear-exponential test problem}
    \label{fig:res:linexp-simple}
\end{figure}

\subsection{Practical setting}
We now deal with four practical problems that may arise when trying to apply our dimension reduction methods to real problems. We start by testing approximations of our algorithms suited for those situations one by one on the linear or linear-exponential test problems. We then combine all four approximations on a slightly more complex Darcy model. This prepares us for the next section, where likelihood-informed reduction will be intractable without such approximations.

\paragraph{The effect of noisy function evaluations.} If the model is stochastic, then to build $(U_r, V_s)$ we might only be able to evaluate $G(\cdot) + \eta$ with $\eta\sim\mathcal N(0, \Gamma)$. \Cref{fig:res:linear-noise} compares the two cases for the linear model with~$d_x = d_y = 100$. As it turns out, performance is very comparable for $\alpha=0.5$ but suffers for $\alpha=1$. This strengthens our case for intermediate (tempered) $\alpha$ values being more robust than $\alpha=1$. In the case where~$\alpha=0$, no evaluations of $G$ are required---we only need $\nabla G$, which we still evaluate exactly here---so the results are identical with and without noise.

\begin{figure}
    \centering
    \begin{tikzpicture}
    \begin{groupplot}[group style={group size=1 by 1, horizontal sep=2cm, vertical sep=2cm}, height=0.37\textwidth, width=0.53\textwidth, legend style={transpose legend, legend columns=3, draw=none}]

        \nextgroupplot[name=mainplot, xmode=log, log basis x=2, ymode=log, grid=major, xlabel={$r=s$}, ylabel={$W_2$}, legend to name=grouplegend, unbounded coords=discard, every axis plot/.append style={thick, mark options={solid, scale=0.9}}]
            \addplot[color=black, mark=x] table [x=dim, y={err_α=0.0_α=0.0}, col sep=comma]{data/linear-noise/no_noise-W2.csv};
            \addlegendentry{$\alpha=0\;\;\;\;$}
            \addplot[color=black, mark=o] table [x=dim, y={err_α=0.5_α=0.5 (incr.)}, col sep=comma]{data/linear-noise/no_noise-W2.csv};
            \addlegendentry{$\alpha=0.5\;\;\;\;$}
            \addplot[color=black, mark=square] table [x=dim, y={err_α=1.0_α=1.0 (incr.)}, col sep=comma]{data/linear-noise/no_noise-W2.csv};
            \addlegendentry{$\alpha=1.0\;\;\;\;$}
            \addplot[color=winered, dashed, mark=x] table [x=dim, y={err_α=0.0_α=0.0}, col sep=comma]{data/linear-noise/noise-W2.csv};
            \addlegendentry{$\alpha=0\mathrm{\;(noise)}$}
            \addplot[color=winered, dashed, mark=o] table [x=dim, y={err_α=0.5_α=0.5 (incr.)}, col sep=comma]{data/linear-noise/noise-W2.csv};
            \addlegendentry{$\alpha=0.5\mathrm{\;(noise)}$}
            \addplot[color=winered, dashed, mark=square] table [x=dim, y={err_α=1.0_α=1.0 (incr.)}, col sep=comma]{data/linear-noise/noise-W2.csv};
            \addlegendentry{$\alpha=1.0\mathrm{\;(noise)}$}
        \end{groupplot}

        \node at (mainplot.south) [below, yshift=-3\pgfkeysvalueof{/pgfplots/every axis title shift}] {\ref*{grouplegend}};
    \end{tikzpicture}
    \caption{Wasserstein-2 posterior errors for the linear test problem in function of the reduced input and output dimensions $r=s$, with either noisy (``noise'') or noise-free function evaluations}
    \label{fig:res:linear-noise}
\end{figure}

\paragraph{Using accumulated posterior information.} We now compare $H_{X\mid y^\dagger}^0$, $H_{X\mid y^\dagger}^{0.5}$, and $H_{X\mid y^\dagger}^1$, each with either 10 or 200 (sufficiently subsampled) MCMC samples per $\alpha$, to the accumulated tempered information described in \cref{sec:tempered:acc}. Accumulated posterior information is summed by taking the samples at $\{\alpha_i = 0.1i\}$ for $i$ going from $0$ to the specified upper bound, all with equal quadrature weights.

\begin{figure}
    \centering
        \begin{tikzpicture}
        \begin{groupplot}[group style={{group size=2 by 1, horizontal sep=2cm, vertical sep=2cm}},name=t2, height=0.37\textwidth, width=0.53\textwidth,legend style={transpose legend,legend columns=0,draw=none}]
        \nextgroupplot[xlabel={$r=s$}, ylabel={$W_2$}, legend to name=grouplegend1, xmode=log, log basis x=2, ymode=log, grid=major, legend to name=grouplegend2, ymin=0.5e-7, ymax=20, title={10 samples per $\alpha$ value}, every axis plot/.append style={thick, mark options={solid, scale=0.9}}]
            \addplot[black, mark=x] table [x=dim, y={err_α=0.0_α=0.0}, col sep=comma] {data/linear-combined-H/W2-10-100-100-true-true.csv};
            \addlegendentry{$\alpha=0\;\;$}
            \addplot[black, mark=o] table [x=dim, y={err_α=0.5_α=0.5 (incr.)}, col sep=comma] {data/linear-combined-H/W2-10-100-100-true-true.csv};
            \addlegendentry{$\alpha=0.5\;\;$}
            \addplot[black, mark=square] table [x=dim, y={err_α=1.0_α=1.0 (incr.)}, col sep=comma] {data/linear-combined-H/W2-10-100-100-true-true.csv};
            \addlegendentry{$\alpha=1\;\;$}
            \addplot[ferngreen, mark=o] table [x=dim, y={"err_α∈[0}, col sep=comma] {data/linear-combined-H/W2-10-100-100-true-true.csv};
            \addlegendentry{$H^{[0,0.5]}\;\;$}
            \addplot[ferngreen, mark=square] table [x=dim, y={err_all_all}, col sep=comma] {data/linear-combined-H/W2-10-100-100-true-true.csv};
            \addlegendentry{$H^{[0,1]}\;\;$}
        
        \nextgroupplot[xshift=-.1\textwidth, xlabel={$r=s$}, legend to name=grouplegend1, xmode=log, log basis x=2, ymode=log, grid=major, legend to name=grouplegend2, ymin=0.5e-7, ymax=20, title={200 samples per $\alpha$ value}, every axis plot/.append style={thick, mark options={solid, scale=0.9}}, yticklabels={}]
            \addplot[black, mark=x] table [x=dim, y={err_α=0.0_α=0.0}, col sep=comma] {data/linear-combined-H/W2-200-100-100-true-true.csv};
            \addlegendentry{$\alpha=0\;\;$}
            \addplot[black, mark=o] table [x=dim, y={err_α=0.5_α=0.5 (incr.)}, col sep=comma] {data/linear-combined-H/W2-200-100-100-true-true.csv};
            \addlegendentry{$\alpha=0.5\;\;$}
            \addplot[black, mark=square] table [x=dim, y={err_α=1.0_α=1.0 (incr.)}, col sep=comma] {data/linear-combined-H/W2-200-100-100-true-true.csv};
            \addlegendentry{$\alpha=1\;\;$}
            \addplot[ferngreen, mark=o] table [x=dim, y={"err_α∈[0}, col sep=comma] {data/linear-combined-H/W2-200-100-100-true-true.csv};
            \addlegendentry{$H^{[0,0.5]}\;\;$}
            \addplot[ferngreen, mark=square] table [x=dim, y={err_all_all}, col sep=comma] {data/linear-combined-H/W2-200-100-100-true-true.csv};
            \addlegendentry{$H^{[0,1]}\;\;$}
        \end{groupplot}
    
        \node at (6.2,0)[below, yshift=-2\pgfkeysvalueof{/pgfplots/every axis title shift}]{\ref*{grouplegend2}};
    \end{tikzpicture}
    \caption{Wasserstein-2 posterior errors for the linear test problem in function of the reduced input and output dimensions $r=s$, comparing different $\alpha$ values to accumulated information (following \cref{sec:tempered:acc})}
    \label{fig:res:linear-combined-H}
\end{figure}

\Cref{fig:res:linear-combined-H} shows the result for the linear test problem with $d_x=d_y=100$. When using only few samples,~${\alpha=0.5}$ vastly outperforms other $\alpha$ values, showing again the robustness of using tempered distributions. Especially~$\alpha=1$ performs very poorly here. The aggregated data up to $\alpha=1$ performs slightly better than~$\alpha=0.5$, which in turn is slightly better than the aggregated data up to $\alpha=0.5$. When using more samples per $\alpha$, the familiar trends from before reemerge: $\alpha=0.5$ and $\alpha=1$ are comparable while $\alpha=0$ lags behind. Both versions with accumulated information perform similarly to the optimal $\alpha=0.5$; even though $H^{[0,1]}$ contains samples from the $\alpha=1$ distribution, performance is not diminished.

\paragraph{Using approximate samples to build the diagnostic matrices.}
When $G$ is expensive, drawing even a limited number of MCMC samples to build $H_{X\mid y^\dagger}^\alpha$ may not be feasible. Here, we suggest using ensemble Kalman methods---e.g., ensemble Kalman inversion (EKI), as described in \cref{ex:eki}---to instead draw approximate samples.

Concretely, throughout this section, ensemble Kalman-based sampling will use ETKI (mentioned in \cref{ex:eki}) with the adaptive timestepper from~\cite{KovStu19}, enhanced to stop additionally at the $\alpha$ values we need for each specific experiment. We now compare this approach to MCMC-based sampling in \cref{fig:res:linexp-mcmc-vs-ekp}, or the linear-exponential problem with $d_x=d_y=100$. The error of reduction with EKI-based samples seems slightly more erratic than that using MCMC-based samples, but otherwise they show comparable performance.

\begin{figure}
    \centering
    \begin{tikzpicture}
    \begin{groupplot}[group style={group size=1 by 1, horizontal sep=2cm, vertical sep=2cm}, height=0.37\textwidth, width=0.53\textwidth, legend style={transpose legend, legend columns=3, draw=none}]
        \nextgroupplot[name=mainplot, xmode=log, log basis x=2, ymode=log, grid=major, xlabel={$r=s$}, ylabel={$\operatorname{hell}^2$}, legend to name=grouplegend, unbounded coords=discard, every axis plot/.append style={thick, mark options={solid, scale=0.9}}]
            \addplot[color=black, mark=x] table [x=dim, y={err_0.0_mcmc}, col sep=comma]{data/linexp-mcmc-vs-ekp/errors.csv};
            \addlegendentry{$\alpha=0\;\;\;\;$}
            \addplot[color=black, mark=o] table [x=dim, y={err_0.5_mcmc}, col sep=comma]{data/linexp-mcmc-vs-ekp/errors.csv};
            \addlegendentry{$\alpha=0.5\;\;\;\;$}
            \addplot[color=black, mark=square] table [x=dim, y={err_1.0_mcmc}, col sep=comma]{data/linexp-mcmc-vs-ekp/errors.csv};
            \addlegendentry{$\alpha=1.0\;\;\;\;$}
            \addplot[color=winered, mark=x] table [x=dim, y={err_0.0_ekp}, col sep=comma]{data/linexp-mcmc-vs-ekp/errors.csv};
            \addlegendentry{$\alpha=0\mathrm{\;(EKI)}$}
            \addplot[color=winered, mark=o] table [x=dim, y={err_0.5_ekp}, col sep=comma]{data/linexp-mcmc-vs-ekp/errors.csv};
            \addlegendentry{$\alpha=0.5\mathrm{\;(EKI)}$}
            \addplot[color=winered, mark=square] table [x=dim, y={err_1.0_ekp}, col sep=comma]{data/linexp-mcmc-vs-ekp/errors.csv};
            \addlegendentry{$\alpha=1.0\mathrm{\;(EKI)}$}
        \end{groupplot}

        \node at (mainplot.south) [below, yshift=-3\pgfkeysvalueof{/pgfplots/every axis title shift}] {\ref*{grouplegend}};
    \end{tikzpicture}
    \caption{Hellinger posterior errors for the linear-exponential test problem in function of the reduced input and output dimensions $r=s$, with either EKI-based (``EKI'') or MCMC-based samples to compute the subspaces}
    \label{fig:res:linexp-mcmc-vs-ekp}
\end{figure}

\clearpage
\begin{figure}
    \centering
    \begin{tikzpicture}
    \begin{groupplot}[group style={group size=1 by 1, horizontal sep=2cm, vertical sep=2cm}, height=0.37\textwidth, width=0.53\textwidth, legend style={transpose legend, legend columns=3, draw=none}]

        \nextgroupplot[name=mainplot, xmode=log, ymode=log, grid=major, xlabel={$\gamma_0$}, ylabel={$\operatorname{hell}^2$}, legend to name=grouplegend, unbounded coords=discard, every axis plot/.append style={thick, mark options={solid, scale=0.9}}]
            \addplot[color=black, mark=x] table [x=gamma0, y={err_0.0_perfect}, col sep=comma]{data/linexp-comp-grads/by-gamma0.csv};
            \addlegendentry{$\alpha=0\;\;$}
            \addplot[color=black, mark=o] table [x=gamma0, y={err_0.5_perfect}, col sep=comma]{data/linexp-comp-grads/by-gamma0.csv};
            \addlegendentry{$\alpha=0.5\;\;$}
            \addplot[color=black, mark=square] table [x=gamma0, y={err_1.0_perfect}, col sep=comma]{data/linexp-comp-grads/by-gamma0.csv};
            \addlegendentry{$\alpha=1\;\;$}
            \addplot[color=ferngreen, mark=x] table [x=gamma0, y={err_0.0_linreg}, col sep=comma]{data/linexp-comp-grads/by-gamma0.csv};
            \addlegendentry{$\alpha=0\mathrm{\;(SL)}\;\;$}
            \addplot[color=ferngreen, mark=o] table [x=gamma0, y={err_0.5_linreg}, col sep=comma]{data/linexp-comp-grads/by-gamma0.csv};
            \addlegendentry{$\alpha=0.5\mathrm{\;(SL)}\;\;$}
            \addplot[color=ferngreen, mark=square] table [x=gamma0, y={err_1.0_linreg}, col sep=comma]{data/linexp-comp-grads/by-gamma0.csv};
            \addlegendentry{$\alpha=1\mathrm{\;(SL)}\;\;$}
        \end{groupplot}

        \node at (mainplot.south) [below, yshift=-3\pgfkeysvalueof{/pgfplots/every axis title shift}] {\ref*{grouplegend}};
    \end{tikzpicture}
    \caption{Hellinger posterior errors for the linear-exponential test problem in function of $\gamma_0$, with gradients that are exact or based on statistical linearization (``SL''), for $r=s=32$}
    \label{fig:res:linexp-comp-grads}
\end{figure}

\paragraph{Approximating derivatives.} In some problems, gradients are not available. They are either too expensive or too noisy, or the model is simply non-differentiable or stochastic. In those cases, we can use the gradient approximations from \cref{sec:tempered:gradfree} instead. \Cref{fig:res:linexp-comp-grads} compares using the true gradient to statistical linearization (\cref{sec:tempered:gradfree}), as a derivative-free alternative, for the linear-exponential test problem with $d_x=d_y=100$ and $r=s=32$. Depending on $\gamma_0$, which determines whether the prior and posterior are restricted to a near-linear region or not, linearization-based gradient approximations are either relatively effective or do not work at all.

We note that an exponential factor is challenging to capture with a linearization; in addition, this experiment uses only a few hundred MCMC samples. \Cref{fig:res:darcy-num-samples} later explores gradient approximations in another nonlinear problem, in function of the number of samples.

\begin{figure}
    \centering
        \begin{tikzpicture}
        \begin{groupplot}[group style={{group size=2 by 3, horizontal sep=2cm, vertical sep=2cm}},name=t2, height=0.37\textwidth, width=0.53\textwidth,legend style={transpose legend,legend columns=0,draw=none}]
        \nextgroupplot[xlabel={$r$}, ylabel={$\mathrm{hell}^2$}, ymode=log, ymin = 5e-4, grid=major, legend to name=grouplegend2, title={$1000$ samples, $\nabla G$, noise-free $G$, input}, every axis plot/.append style={thick, mark options={solid, scale=0.9}}]
            \addplot[winered, mark=triangle] table [x={in_r}, y={err_pca_u_pca_g}, col sep=comma] {data/darcy/errors-true-false-1000-none-perfect_summary_50.csv};
            \addlegendentry{$\mathrm{PCA}\;\;$}
            \addplot[black, mark=x] table [x={in_r}, y={err_Hu_0.0_ekp_perfect_Hg_0.0_ekp_perfect}, col sep=comma] {data/darcy/errors-true-false-1000-none-perfect_summary_50.csv};
            \addlegendentry{$\alpha=0\;\;$}
            \addplot[black, mark=o] table [x={in_r}, y={err_Hu_0.5_ekp_perfect_Hg_0.5_ekp_perfect}, col sep=comma] {data/darcy/errors-true-false-1000-none-perfect_summary_50.csv};
            \addlegendentry{$\alpha=0.5\;\;$}
            \addplot[black, mark=square] table [x={in_r}, y={err_Hu_1.0_ekp_perfect_Hg_1.0_ekp_perfect}, col sep=comma] {data/darcy/errors-true-false-1000-none-perfect_summary_50.csv};
            \addlegendentry{$\alpha=1\;\;$}
            \addplot[ferngreen, mark=o] table [x={in_r}, y={err_Hu_half_ekp_perfect_Hg_half_ekp_perfect}, col sep=comma] {data/darcy/errors-true-false-1000-none-perfect_summary_50.csv};
            \addlegendentry{$H^{[0,0.5]}\;\;$}
            \addplot[ferngreen, mark=square] table [x={in_r}, y={err_Hu_all_ekp_perfect_Hg_all_ekp_perfect}, col sep=comma] {data/darcy/errors-true-false-1000-none-perfect_summary_50.csv};
            \addlegendentry{$H^{[0,1]}\;\;$}
        
        \nextgroupplot[xshift=-.08\textwidth, xlabel={$s$}, ymode=log, ymin = 5e-4, grid=major, legend to name=grouplegend2, title={---, output}, every axis plot/.append style={thick, mark options={solid, scale=0.9}}, yticklabels={}]
            \addplot[winered, mark=triangle] table [x={out_r}, y={err_pca_u_pca_g}, col sep=comma] {data/darcy/errors-false-true-1000-none-perfect_summary_50.csv};
            \addlegendentry{$\mathrm{PCA}\;\;$}
            \addplot[black, mark=x] table [x={out_r}, y={err_Hu_0.0_ekp_perfect_Hg_0.0_ekp_perfect}, col sep=comma] {data/darcy/errors-false-true-1000-none-perfect_summary_50.csv};
            \addlegendentry{$\alpha=0\;\;$}
            \addplot[black, mark=o] table [x={out_r}, y={err_Hu_0.5_ekp_perfect_Hg_0.5_ekp_perfect}, col sep=comma] {data/darcy/errors-false-true-1000-none-perfect_summary_50.csv};
            \addlegendentry{$\alpha=0.5\;\;$}
            \addplot[black, mark=square] table [x={out_r}, y={err_Hu_1.0_ekp_perfect_Hg_1.0_ekp_perfect}, col sep=comma] {data/darcy/errors-false-true-1000-none-perfect_summary_50.csv};
            \addlegendentry{$\alpha=1\;\;$}
            \addplot[ferngreen, mark=o] table [x={out_r}, y={err_Hu_half_ekp_perfect_Hg_half_ekp_perfect}, col sep=comma] {data/darcy/errors-false-true-1000-none-perfect_summary_50.csv};
            \addlegendentry{$H^{[0,0.5]}\;\;$}
            \addplot[ferngreen, mark=square] table [x={out_r}, y={err_Hu_all_ekp_perfect_Hg_all_ekp_perfect}, col sep=comma] {data/darcy/errors-false-true-1000-none-perfect_summary_50.csv};
            \addlegendentry{$H^{[0,1]}\;\;$}
        
        \nextgroupplot[xlabel={$r$}, ylabel={$\mathrm{hell}^2$}, ymode=log, ymin = 5e-4, grid=major, legend to name=grouplegend2, title={50 samples, noise-free SL, noisy $G$, input}, every axis plot/.append style={thick, mark options={solid, scale=0.9}}]
            \addplot[winered, mark=triangle] table [x={in_r}, y={err_pca_u_pca_g}, col sep=comma] {data/darcy/errors-true-false-50-res-linreg_summary_50.csv};
            \addlegendentry{$\mathrm{PCA}\;\;$}
            \addplot[black, mark=x] table [x={in_r}, y={err_Hu_0.0_ekp_linreg_Hg_0.0_ekp_linreg}, col sep=comma] {data/darcy/errors-true-false-50-res-linreg_summary_50.csv};
            \addlegendentry{$\alpha=0\;\;$}
            \addplot[black, mark=o] table [x={in_r}, y={err_Hu_0.5_ekp_linreg_Hg_0.5_ekp_linreg}, col sep=comma] {data/darcy/errors-true-false-50-res-linreg_summary_50.csv};
            \addlegendentry{$\alpha=0.5\;\;$}
            \addplot[black, mark=square] table [x={in_r}, y={err_Hu_1.0_ekp_linreg_Hg_1.0_ekp_linreg}, col sep=comma] {data/darcy/errors-true-false-50-res-linreg_summary_50.csv};
            \addlegendentry{$\alpha=1\;\;$}
            \addplot[ferngreen, mark=o] table [x={in_r}, y={err_Hu_half_ekp_linreg_Hg_half_ekp_linreg}, col sep=comma] {data/darcy/errors-true-false-50-res-linreg_summary_50.csv};
            \addlegendentry{$H^{[0,0.5]}\;\;$}
            \addplot[ferngreen, mark=square] table [x={in_r}, y={err_Hu_all_ekp_linreg_Hg_all_ekp_linreg}, col sep=comma] {data/darcy/errors-true-false-50-res-linreg_summary_50.csv};
            \addlegendentry{$H^{[0,1]}\;\;$}
        
        \nextgroupplot[xlabel={$s$}, ymode=log, ymin = 5e-4, grid=major, legend to name=grouplegend2, title={---, output}, every axis plot/.append style={thick, mark options={solid, scale=0.9}}, yticklabels={}]
            \addplot[winered, mark=triangle] table [x={out_r}, y={err_pca_u_pca_g}, col sep=comma] {data/darcy/errors-false-true-50-res-linreg_summary_50.csv};
            \addlegendentry{$\mathrm{PCA}\;\;$}
            \addplot[black, mark=x] table [x={out_r}, y={err_Hu_0.0_ekp_linreg_Hg_0.0_ekp_linreg}, col sep=comma] {data/darcy/errors-false-true-50-res-linreg_summary_50.csv};
            \addlegendentry{$\alpha=0\;\;$}
            \addplot[black, mark=o] table [x={out_r}, y={err_Hu_0.5_ekp_linreg_Hg_0.5_ekp_linreg}, col sep=comma] {data/darcy/errors-false-true-50-res-linreg_summary_50.csv};
            \addlegendentry{$\alpha=0.5\;\;$}
            \addplot[black, mark=square] table [x={out_r}, y={err_Hu_1.0_ekp_linreg_Hg_1.0_ekp_linreg}, col sep=comma] {data/darcy/errors-false-true-50-res-linreg_summary_50.csv};
            \addlegendentry{$\alpha=1\;\;$}
            \addplot[ferngreen, mark=o] table [x={out_r}, y={err_Hu_half_ekp_linreg_Hg_half_ekp_linreg}, col sep=comma] {data/darcy/errors-false-true-50-res-linreg_summary_50.csv};
            \addlegendentry{$H^{[0,0.5]}\;\;$}
            \addplot[ferngreen, mark=square] table [x={out_r}, y={err_Hu_all_ekp_linreg_Hg_all_ekp_linreg}, col sep=comma] {data/darcy/errors-false-true-50-res-linreg_summary_50.csv};
            \addlegendentry{$H^{[0,1]}\;\;$}
        
        \nextgroupplot[xlabel={$r$}, ylabel={$\mathrm{hell}^2$}, ymode=log, ymin = 5e-4, grid=major, legend to name=grouplegend2, title={50 samples, noisy SL, noisy $G$, input}, every axis plot/.append style={thick, mark options={solid, scale=0.9}}]
            \addplot[winered, mark=triangle] table [x={in_r}, y={err_pca_u_pca_g}, col sep=comma] {data/darcy/errors-true-false-50-all-linreg_summary_50.csv};
            \addlegendentry{$\mathrm{PCA}\;\;$}
            \addplot[black, mark=x] table [x={in_r}, y={err_Hu_0.0_ekp_linreg_Hg_0.0_ekp_linreg}, col sep=comma] {data/darcy/errors-true-false-50-all-linreg_summary_50.csv};
            \addlegendentry{$\alpha=0\;\;$}
            \addplot[black, mark=o] table [x={in_r}, y={err_Hu_0.5_ekp_linreg_Hg_0.5_ekp_linreg}, col sep=comma] {data/darcy/errors-true-false-50-all-linreg_summary_50.csv};
            \addlegendentry{$\alpha=0.5\;\;$}
            \addplot[black, mark=square] table [x={in_r}, y={err_Hu_1.0_ekp_linreg_Hg_1.0_ekp_linreg}, col sep=comma] {data/darcy/errors-true-false-50-all-linreg_summary_50.csv};
            \addlegendentry{$\alpha=1\;\;$}
            \addplot[ferngreen, mark=o] table [x={in_r}, y={err_Hu_half_ekp_linreg_Hg_half_ekp_linreg}, col sep=comma] {data/darcy/errors-true-false-50-all-linreg_summary_50.csv};
            \addlegendentry{$H^{[0,0.5]}\;\;$}
            \addplot[ferngreen, mark=square] table [x={in_r}, y={err_Hu_all_ekp_linreg_Hg_all_ekp_linreg}, col sep=comma] {data/darcy/errors-true-false-50-all-linreg_summary_50.csv};
            \addlegendentry{$H^{[0,1]}\;\;$}
            \draw[black, dashed, thick] (20, 1e-4) -- (20, 1e1);
        
        \nextgroupplot[xlabel={$s$}, legend to name=grouplegend1, ymode=log, ymin = 5e-4, grid=major, title={---, output}, every axis plot/.append style={thick, mark options={solid, scale=0.9}}, yticklabels={}]
            \addplot[winered, mark=triangle] table [x={out_r}, y={err_pca_u_pca_g}, col sep=comma] {data/darcy/errors-false-true-50-all-linreg_summary_50.csv};
            \addlegendentry{$\mathrm{PCA}\;\;$}
            \addplot[black, mark=x] table [x={out_r}, y={err_Hu_0.0_ekp_linreg_Hg_0.0_ekp_linreg}, col sep=comma] {data/darcy/errors-false-true-50-all-linreg_summary_50.csv};
            \addlegendentry{$\alpha=0\;\;$}
            \addplot[black, mark=o] table [x={out_r}, y={err_Hu_0.5_ekp_linreg_Hg_0.5_ekp_linreg}, col sep=comma] {data/darcy/errors-false-true-50-all-linreg_summary_50.csv};
            \addlegendentry{$\alpha=0.5\;\;$}
            \addplot[black, mark=square] table [x={out_r}, y={err_Hu_1.0_ekp_linreg_Hg_1.0_ekp_linreg}, col sep=comma] {data/darcy/errors-false-true-50-all-linreg_summary_50.csv};
            \addlegendentry{$\alpha=1\;\;$}
            \addplot[ferngreen, mark=o] table [x={out_r}, y={err_Hu_half_ekp_linreg_Hg_half_ekp_linreg}, col sep=comma] {data/darcy/errors-false-true-50-all-linreg_summary_50.csv};
            \addlegendentry{$H^{[0,0.5]}\;\;$}
            \addplot[ferngreen, mark=square] table [x={out_r}, y={err_Hu_all_ekp_linreg_Hg_all_ekp_linreg}, col sep=comma] {data/darcy/errors-false-true-50-all-linreg_summary_50.csv};
            \addlegendentry{$H^{[0,1]}\;\;$}
            \draw[black, dashed, thick] (10, 1e-4) -- (10, 1e1);
        \end{groupplot}
    
        \node at (6.5,-12)[below, yshift=-2\pgfkeysvalueof{/pgfplots/every axis title shift}]{\ref*{grouplegend1}};
    \end{tikzpicture}
    \caption{Hellinger posterior errors for the Darcy test problem in function of $r$ and $s$}
    \label{fig:res:darcy}
\end{figure}

\paragraph{Combining everything in a Darcy flow problem.} We now apply the various dimension reduction methods to a slightly more complex problem: two-dimensional Darcy flow. We use approximate EKI-based sampling and will vary the other approximations. This experiment is nonlinear and the output samples a grid in the spatial field. We expect PCA to do well here in the output space. To deal with random error fluctuations due to randomly generated observations and EKI samples, each plot for this experiment shows the median error out of $32$ runs. In addition, in order to robustly combine noisy samples with statistical linearization, we add a regularization nugget $\frac{\Tr\Gamma}JI$ to $C^{xx}$ before inversion whenever SL is applied to noisy data. Again, accumulated posterior information is collected at $\alpha$s that are multiples of $0.1$. We consider various settings.

\begin{enumerate}
    \item First, as a best-case scenario, we use $J=1000$ EKI samples, exact gradients, and no noise. The first row in \cref{fig:res:darcy} shows robust dimension reduction in the input space, with all methods performing almost identically except for $\alpha=0$ and especially PCA (which makes sense due to the uninformative prior). In the output space, we get a similar picture, but with $\alpha=0$ and PCA performing almost as well as the other techniques.
    \item We then add the complicating factors from this subsection. We reduce the number of samples to $J=50$ and add noise to them. We also use statistical linearization instead of exact gradients, but (somewhat artificially) use the samples \emph{without} noise for the SL computation. The error is computed using the noise-free model; the noise is added only to find the reduced spaces. The results are shown in the second row of \cref{fig:res:darcy}. Most configurations suffer heavily in the input space, with only the accumulated diagnostic matrix still effectively reducing the dimensions. In the output space, the difference with the best-case scenario is smaller; mainly $\alpha=1$ is impacted.
    \item If we then, more realistically, use the noisy samples also for the SL computation, the last row of \cref{fig:res:darcy} shows the picture worsening. In the input space even the leading performer, the accumulated diagnostic matrix, becomes significantly worse. This time, the output reduced space's quality is also heavily impacted, with PCA---which does not use gradients---now the best method.
\end{enumerate}
In summary, $\alpha$-LIS shows some robustness, particularly when used in conjunction with accumulation, in the face of approximations and noisy samples---more than DB-LIS, for instance---but struggles from a certain point onward. \Cref{fig:res:darcy-num-samples} keeps the most complex setting from the last row of \cref{fig:res:darcy}, and fixes $r=20$ and $s=10$ in the input and output spaces, respectively, while varying the number of EKI samples $J$. As $J$ increases, the quality of the reduced spaces improves significantly to performance comparable with the first row of \cref{fig:res:darcy}, indicating that the challenging assumptions in this setting can be overcome by using more ensemble members. In fact, we can almost recover the performance without noise for very large $J$.

\begin{figure}
    \centering
        \begin{tikzpicture}
        \begin{groupplot}[group style={{group size=2 by 1, horizontal sep=2cm, vertical sep=2cm}},name=t2, height=0.37\textwidth, width=0.53\textwidth,legend style={transpose legend,legend columns=0,draw=none}]
        \nextgroupplot[xlabel={$J$}, ylabel={$\mathrm{hell}^2$}, legend to name=grouplegend1, xmode=log, ymode=log, grid=major, legend to name=grouplegend2, ymin=5e-2, ymax=1.3, title={Input space reduction}, every axis plot/.append style={thick, mark options={solid, scale=0.9}}]
            \addplot[winered, mark=triangle] table [x={num_ekp_samps}, y={err_pca_u_pca_g}, col sep=comma] {data/darcy-num-samples/errors-true-false-all-linreg_summary_50.csv};
            \addlegendentry{$\mathrm{PCA}\;\;$}
            \addplot[black, mark=x] table [x={num_ekp_samps}, y={err_Hu_0.0_ekp_linreg_Hg_0.0_ekp_linreg}, col sep=comma] {data/darcy-num-samples/errors-true-false-all-linreg_summary_50.csv};
            \addlegendentry{$\alpha=0\;\;$}
            \addplot[black, mark=o] table [x={num_ekp_samps}, y={err_Hu_0.5_ekp_linreg_Hg_0.5_ekp_linreg}, col sep=comma] {data/darcy-num-samples/errors-true-false-all-linreg_summary_50.csv};
            \addlegendentry{$\alpha=0.5\;\;$}
            \addplot[black, mark=square] table [x={num_ekp_samps}, y={err_Hu_1.0_ekp_linreg_Hg_1.0_ekp_linreg}, col sep=comma] {data/darcy-num-samples/errors-true-false-all-linreg_summary_50.csv};
            \addlegendentry{$\alpha=1\;\;$}
            \addplot[ferngreen, mark=o] table [x={num_ekp_samps}, y={err_Hu_half_ekp_linreg_Hg_half_ekp_linreg}, col sep=comma] {data/darcy-num-samples/errors-true-false-all-linreg_summary_50.csv};
            \addlegendentry{$H^{[0,0.5]}\;\;$}
            \addplot[ferngreen, mark=square] table [x={num_ekp_samps}, y={err_Hu_all_ekp_linreg_Hg_all_ekp_linreg}, col sep=comma] {data/darcy-num-samples/errors-true-false-all-linreg_summary_50.csv};
            \addlegendentry{$H^{[0,1]}\;\;$}

        \nextgroupplot[xshift=-0.08\textwidth, xlabel={$J$}, legend to name=grouplegend1, xmode=log, ymode=log, grid=major, legend to name=grouplegend2, ymin=5e-2, ymax=1.3, title={Output space reduction}, every axis plot/.append style={thick, mark options={solid, scale=0.9}}, yticklabels={}]
            \addplot[winered, mark=triangle] table [x={num_ekp_samps}, y={err_pca_u_pca_g}, col sep=comma] {data/darcy-num-samples/errors-false-true-all-linreg_summary_50.csv};
            \addlegendentry{$\mathrm{PCA}\;\;$}
            \addplot[black, mark=x] table [x={num_ekp_samps}, y={err_Hu_0.0_ekp_linreg_Hg_0.0_ekp_linreg}, col sep=comma] {data/darcy-num-samples/errors-false-true-all-linreg_summary_50.csv};
            \addlegendentry{$\alpha=0\;\;$}
            \addplot[black, mark=o] table [x={num_ekp_samps}, y={err_Hu_0.5_ekp_linreg_Hg_0.5_ekp_linreg}, col sep=comma] {data/darcy-num-samples/errors-false-true-all-linreg_summary_50.csv};
            \addlegendentry{$\alpha=0.5\;\;$}
            \addplot[black, mark=square] table [x={num_ekp_samps}, y={err_Hu_1.0_ekp_linreg_Hg_1.0_ekp_linreg}, col sep=comma] {data/darcy-num-samples/errors-false-true-all-linreg_summary_50.csv};
            \addlegendentry{$\alpha=1\;\;$}
            \addplot[ferngreen, mark=o] table [x={num_ekp_samps}, y={err_Hu_half_ekp_linreg_Hg_half_ekp_linreg}, col sep=comma] {data/darcy-num-samples/errors-false-true-all-linreg_summary_50.csv};
            \addlegendentry{$H^{[0,0.5]}\;\;$}
            \addplot[ferngreen, mark=square] table [x={num_ekp_samps}, y={err_Hu_all_ekp_linreg_Hg_all_ekp_linreg}, col sep=comma] {data/darcy-num-samples/errors-false-true-all-linreg_summary_50.csv};
            \addlegendentry{$H^{[0,1]}\;\;$}
        \end{groupplot}
    
        \node at (6.5,0)[below, yshift=-2\pgfkeysvalueof{/pgfplots/every axis title shift}]{\ref*{grouplegend2}};
    \end{tikzpicture}
    \caption{Hellinger posterior errors for the Darcy test problem in function of $J$, for the same settings as the last row in \cref{fig:res:darcy} and with $r=20$ (input space reduction) and $s=10$ (output space reduction), indicated in \cref{fig:res:darcy} by vertical lines}
    \label{fig:res:darcy-num-samples}
\end{figure}

\section{Application to the \emph{calibrate, emulate, sample} framework} \label{sec:ces}
The \emph{calibrate, emulate, sample} (CES) framework~\cite{CleGarLanSchStu21} partitions the task of sampling from $\pi(x\mid y^\dagger)$: \begin{enumerate}
    \item \textbf{Calibrate.} Apply a computationally cheap, low-accuracy sampler such as ensemble Kalman inversion (EKI, \cref{ex:eki}) or the ensemble Kalman sampler (EKS). In this paper, we focus on samplers that produce samples of tempered distributions as iterations, such as EKI.
    \item \textbf{Emulate.} Use the samples from the calibration step to build an emulator $\widetilde{G}$, using function regression---e.g., Gaussian process (GP) regression---for the forward map $G$.
    \item \textbf{Sample.} Apply a computationally expensive, high-accuracy sampler such as Markov chain Monte Carlo (MCMC) to $\widetilde G$.
\end{enumerate}
Samplers based on the ensemble Kalman methodology are appropriate in the calibration step, as they are model-agnostic, robust to noise, and performant in very high-dimensional settings. Emulators, on the other hand, scale poorly in the number of dimensions when trained on a relatively modest number of data samples. These properties, combined with the approximate posterior samples that are available from the calibration stage without any additional cost, make the CES framework well-suited to the dimension reduction methods discussed in \cref{sec:tempered}. We propose the following methodology.
\begin{enumerate}
    \item \textbf{Calibrate.} The calibration step is performed in the original unreduced input and output spaces. Collect the sequence of approximate tempered-posterior samples. 
    \item \textbf{Encode.} Use samples from \textbf{Calibrate} to construct a reduced space with affine encoders $U_r$ and $V_s$, using the derivative-free extension of the methodology.
    \item \textbf{Emulate.} Encode all training data samples from \textbf{Calibrate} and train the emulator in the reduced space on this data.
    \item \textbf{Sample.} Sample from the emulator in the reduced space\footnote{Alternatively, sampling can be done in the full space (e.g., when drawing conditional prior samples is hard).}. The final reduced-posterior samples are projected back into the full space by drawing conditional samples from $\pi_{X_\perp\mid X_r}$.
\end{enumerate}
In practice, even approximate sampling with ensemble Kalman methods in the calibration step can dominate the cost of the algorithm if $G$ is very expensive. In those cases, a solution is to run the ensemble Kalman method for only a few steps. In the case of EKI (\cref{ex:eki}), simulating until time $\alpha$ results in input-output pairs $\{x^{(j)}, G(x^{(j)})\}_{j=1}^J$, where the $x^{(j)}$ are approximate samples from $\pi_{X\mid Y}^{(\alpha)}(\cdot\,\mid y^\dagger)$. It also computes such samples for various $0\le\alpha_k<\alpha$.
\begin{remark}[Emulator regression points]
    Typically in the CES framework, regression points for the emulators are assumed (though not required) to be sampled approximately from $\pi_{X\mid Y}(\cdot\,\mid y^\dagger)$ (so, with $\alpha=1$). However, not only is using a smaller $\alpha$ computationally interesting when $G$ is expensive, it might actually result in better emulators. This is observed in practice and may be related to the result~\cite[Theorem 3]{helin2023introductiongaussianprocessregression}.
\end{remark}

The aim of the emulation stage is then to make $\pi_{Y_s\mid X_r}$ cheaper to evaluate, in order to make sampling from $\pi_{X\mid Y}^*$ in the later sampling stage feasible. In the case of a statistical emulator, such as a GP, we build an emulator distribution $(\widetilde{G}_s, \widetilde\Gamma_s)$, such that
\begin{equation}
    \log\pi_{Y_s\mid X_r}(y_s\mid\,\cdot) \approx -\frac12\norm{y_s - \widetilde G_s(\cdot)}^2_{\widetilde\Gamma_s(\cdot) } -\frac{1}{2} \log\det(\widetilde\Gamma_s(\cdot)) +  \log C \eqqcolon \log\widetilde\pi_{Y_s\mid X_r}(y_s\mid\,\cdot).
\end{equation}
We can define
\begin{equation}
    \widetilde g^{(j)} \coloneqq V_s^\top G(x^{(j)}),
\end{equation}
and train the emulator distribution $(\widetilde G_s, \widetilde{\Gamma}_s)$ by regression on the data $\{x_r^{(j)}, \widetilde g^{(j)}\}_{j=1}^J$. In the case of a deterministic emulator, one simply learns $\widetilde{G}_s$ and defines $\widetilde{\Gamma}_s \coloneqq V_s^\top \Gamma V_s$. Lastly, in the sample stage, we use an algorithm such as MCMC to sample $x$ values distributed according to
\begin{equation}
    \widetilde\pi^*_{X\mid Y}(x\mid y^\dagger) \propto \widetilde\pi_{Y_s\mid X_r}(y_s^\dagger\mid x_r)\pi_{X_r}(x_r)\pi_{X_\perp\mid X_r}(x_\perp\mid x_r)
\end{equation}
by first sampling $x_r$ and then conditionally sampling $x_\perp$. 

\subsection{Example: Lorenz `96 with spatially-dependent forcing} \label{sec:ces:lorenz}
While an exhaustive benchmarking of $\alpha$-LIS reduced spaces within the CES algorithm is beyond the scope of this paper, we will illustrate with an example that prior-based, tempered, and accumulated $\alpha$-LIS show robustness and can outperform PCA, even when working with few samples and large noise on evaluations.

As an example of $\alpha$-LIS applied within the CES framework, we will look at a model problem that is very challenging to sample with MCMC directly. In particular, the forward map $G$ contains observables of a chaotic state, which leads to a highly noisy likelihood in MCMC if applied directly to the system. Without using complex annealing strategies or drastically increasing the simulator cost within $G$, MCMC will be unable to sample this posterior effectively. On the other hand, CES only interfaces with the noisy model during the calibration stage---with EKI, which is well suited to sampling with a noisy likelihood. The regression-based emulator $\widetilde{G}$ will be faster and smoother than the true map $G$ by design, which makes the sampling stage feasible. 

The Lorenz `96 ODE system~\cite{Lor96} is commonly used to represent a toy atmospheric model. Given~$N$ states, we evolve the system
\begin{equation} \label{eq:res:lorenz}
\begin{aligned}
    \frac{\mathrm du_i}{\mathrm dt} &= (u_{i+1} - u_{i-2}) u_{i-1} - u_i + F_i, \qquad i = 1, \dots, N,\\
    &\text{ subject to the cyclic boundary condition }u_{j} = u_{N+j}.
\end{aligned}
\end{equation}
This system contains nonlinear advection and linear damping, and we extend the classical formulation by introducing a forcing term $F_i$ \cite{DunGjiMorSch25_pre} that depends on the state index but not on time. Moderate-magnitude forcings $F_i$ can induce chaotic behavior in this system. 

We denote the full state and forcing vectors by $u$ and $F$, respectively, and define by $u(t; F, u^{(0)})$ the state vector at time $t$ subject to forcing $F$ and with initial state $u^{(0)}$. We are interested in learning the forcing from data in this chaotic regime. Following~\cite{DunGarSchStu21,DunGjiMorSch25_pre}, one can learn the parameters of the chaotic system by fitting certain statistics of the system. For a given $u^{(0)}$ and end time $T$, the empirical mean and standard deviation over the window $[0,T]$ are given by
\begin{subequations}
\begin{align}
\mathrm{mean}(u;F,u^{(0)},T) &\coloneqq \frac{1}{T}\int_0^T u(t;F,u^{(0)}) \,\mathrm{d}t, \\
\mathrm{std}(u;F,u^{(0)},T) &\coloneqq \left(\mathrm{mean}(u;f,u^{(0)},T)^2-\frac{1}{T}\int_0^T u(t;F,u^{(0)})^2 \,\mathrm{d}t\right)^{1/2}.
\end{align}
\end{subequations}
We define a forward map as
\begin{equation}
    G(F; T) = \begin{bmatrix}
        \mathrm{mean}(u;F,u^{(0)},T)\\
        \mathrm{std}(u;F,u^{(0)},T)
    \end{bmatrix}, \qquad \qquad u^{(0)}\sim \rho(F),
\end{equation}
where $\rho(F)$ is the invariant measure of the Lorenz `96 system with the forcing $F$. To simulate $u^{(0)}\sim\rho(F)$, we run a short spinning-up phase before aggregating the statistics. To motivate $G$, consider the observation $y^\dagger\sim G(F^\dagger;T)$. It would appear initially that to estimate $F^\dagger$ from $y^\dagger$ one must infer the true initial condition $(u^{(0)})^\dagger$. However, aggregate statistics (for sufficiently large $T$) of dynamical systems that are sufficiently mixing can exploit central-limit theorems (e.g.,~\cite{BurDun87,You98}). Lorenz `96 is observed to satisfy this mixing property, and so we assume that $G$ is statistically equivalent from any initial condition from the invariant measure $\rho(F)$.

We can write this in the familiar Bayesian framework, where $F$ plays the role of our abstract parameters $X$, as 
\begin{equation}
    Y \sim G(F;T) \approx G_\infty(F) + \eta, \qquad \text{with} \qquad \eta \sim \mathcal N\Bigl(0,\frac{1}{T}\Gamma\Bigr). 
\end{equation}
Here, $G_\infty \coloneqq G(F;\infty)$ is the ergodic average and is deterministic, while $\Gamma$ characterizes the noise induced from having an unknown initial condition on $G$. Since integration up to $T\gg1$ is infeasible\footnote{The magnitude of the Lorenz `96 noise is not small until $T\approx 10^3$.}, and since we must always evaluate $G$ with \emph{some} initial condition, evaluations of $G$ are unavoidably ``noisy'' (as in some experiments in the previous section). Hence, we cannot use standard optimization approaches to learn the parameters.

For Lorenz `96, ergodicity is observed at just $T \approx 10$, making for relatively cheap but rather noisy evaluations of $G$. This motivates using methods such as EKI that are robust to noise and can perform inversion these noisy integrations~\cite{DunGjiMorSch25_pre,DunDunStuWol22, DunGarSchStu21}. 

For our experiment we use $T=20$ (after a spin-up of length $4$), which results in rather noisy evaluations, with timesteps $\Delta t=0.01$. We set dimension $d_x=N=40$, such that $d_y=80$, and choose a true forcing
\begin{equation}
    F_i^\dagger = 8 + 6\sin\left(4\pi\,\frac{i-1}{N-1}\right).
\end{equation}
We estimate $\Gamma$ as the covariance of $Y$ for 200 random initial conditions, plus $10^{-2}I$. We compare \begin{itemize}
    \item an informative prior covariance $(\Gamma_0)_{ij} = 25\exp(-|i-j|)$, nudging $F$ to be smooth, to
    \item an uninformative prior covariance $\Gamma_0=25I$,
\end{itemize}
and set the prior equal to $\mathcal N({\bf8}, \Gamma_0)$, where $\bf8$ is a vector filled with the number 8.

We use the CES pipeline with dimension reduction as introduced before: we run EKI with 200 samples, compute the reduced space with either all 200 or a subset of 50 random samples per $\alpha$, train an emulator using regression over random Fourier features\footnote{Random-feature regression (RFR)~\cite{DunNelMul25} can be thought of as Gaussian-process regression (GPR)~\cite{RasWil06} with a random low-rank kernel representation for computational scaling advantage. Exact relationships link random-feature families and Gaussian-process kernels; here, we leverage the fact that random Fourier features approximate a GP with squared-exponential ARD kernel~\cite{RahRec07}. In all experiments we draw 200 features, add a $10^{-6}$ nugget term, and perform training to tune the effective lengthscale hyperparameters.} on the 200 samples at $\alpha=0.5$, and then sample from this emulator in the reduced space with MCMC. Like before, any accumulated posterior information comes from $\alpha\in\{0, 0.1, \ldots\}$. Based on the results in \cref{fig:res:darcy-num-samples}, we vary the input space reduction method while the output space is reduced with either PCA (when the input is reduced with PCA) or 0-LIS (otherwise). We approximate derivatives with SL, but notably use all samples up to a given $\alpha_k$ to compute the SL approximation at $\alpha_k$. This is important to avoid unreliable gradient approximations at larger $\alpha_k$, where ensembles cluster more than at small $\alpha$s.

As we cannot evaluate the true posterior in this setting, our metric is the distance of the reduced posterior mean to the true forcing parameter $F^\dagger$. Although the mean of even the full posterior cannot be expected to match $F^\dagger$ due to the noise in the evaluation $y^\dagger$ and the nonlinearity of $G$, this metric is sufficient to separate very poor reduced posteriors from reasonably good ones.

We implemented the emulation stage using the \texttt{CalibrateEmulateSample.jl} package~\cite{Dun_etal24}; in fact, this experiment series was modified from the \texttt{Lorenz/*\_spatial\_dep.jl} examples in that package. MCMC sampling and dimension reduction using $\alpha$-LIS are also available in the archive linked in the section on data availability.

\begin{figure}
    \centering
        \begin{tikzpicture}
        \begin{groupplot}[group style={{group size=2 by 2, horizontal sep=2cm, vertical sep=1.8cm}},name=t2, height=0.37\textwidth, width=0.53\textwidth,legend style={transpose legend,legend columns=0,draw=none}]

        \nextgroupplot[xlabel={$r=s$}, ylabel={$\norm{F^\dagger - \mathrm{sample\;mean}}$}, ymin = 0, ymax = 31, grid=major, legend to name=grouplegend2, title={50 samples, informative prior}, every axis plot/.append style={thick, mark options={solid, scale=0.9}}]
            \addplot[winered, mark=triangle] table [x={in_r}, y={err_pca_u_pca_g}, col sep=comma] {data/lorenz/errors-50-1.0-all-1-1.csv};
            \addlegendentry{$\mathrm{PCA}\;\;$}
            \addplot[black, mark=x] table [x={in_r}, y={err_Hu_0.0_ekp_linreg_Hg_0.0_ekp_linreg}, col sep=comma] {data/lorenz/errors-50-1.0-all-1-1.csv};
            \addlegendentry{$\alpha=0\;\;$}
            \addplot[black, mark=o] table [x={in_r}, y={err_Hu_0.5_ekp_linreg_Hg_0.0_ekp_linreg}, col sep=comma] {data/lorenz/errors-50-1.0-all-1-1.csv};
            \addlegendentry{$\alpha=0.5\;\;$}
            \addplot[black, mark=square] table [x={in_r}, y={err_Hu_1.0_ekp_linreg_Hg_0.0_ekp_linreg}, col sep=comma] {data/lorenz/errors-50-1.0-all-1-1.csv};
            \addlegendentry{$\alpha=1\;\;$}
            \addplot[ferngreen, mark=o] table [x={in_r}, y={err_Hu_half_ekp_linreg_Hg_0.0_ekp_linreg}, col sep=comma] {data/lorenz/errors-50-1.0-all-1-1.csv};
            \addlegendentry{$H^{[0,0.5]}\;\;$}
            \addplot[ferngreen, mark=square] table [x={in_r}, y={err_Hu_all_ekp_linreg_Hg_0.0_ekp_linreg}, col sep=comma] {data/lorenz/errors-50-1.0-all-1-1.csv};
            \addlegendentry{$H^{[0,1]}\;\;$}

        \nextgroupplot[xshift=-.1\textwidth, xlabel={$r=s$}, ymin = 0, ymax = 31, grid=major, legend to name=grouplegend2, title={50 samples, uninformative prior}, every axis plot/.append style={thick, mark options={solid, scale=0.9}}, yticklabels={}]
            \addplot[winered, mark=triangle] table [x={in_r}, y={err_pca_u_pca_g}, col sep=comma] {data/lorenz/errors-50-1.0e-6-all-1-1.csv};
            \addlegendentry{$\mathrm{PCA}\;\;$}
            \addplot[black, mark=x] table [x={in_r}, y={err_Hu_0.0_ekp_linreg_Hg_0.0_ekp_linreg}, col sep=comma] {data/lorenz/errors-50-1.0e-6-all-1-1.csv};
            \addlegendentry{$\alpha=0\;\;$}
            \addplot[black, mark=o] table [x={in_r}, y={err_Hu_0.5_ekp_linreg_Hg_0.0_ekp_linreg}, col sep=comma] {data/lorenz/errors-50-1.0e-6-all-1-1.csv};
            \addlegendentry{$\alpha=0.5\;\;$}
            \addplot[black, mark=square] table [x={in_r}, y={err_Hu_1.0_ekp_linreg_Hg_0.0_ekp_linreg}, col sep=comma] {data/lorenz/errors-50-1.0e-6-all-1-1.csv};
            \addlegendentry{$\alpha=1\;\;$}
            \addplot[ferngreen, mark=o] table [x={in_r}, y={err_Hu_half_ekp_linreg_Hg_0.0_ekp_linreg}, col sep=comma] {data/lorenz/errors-50-1.0e-6-all-1-1.csv};
            \addlegendentry{$H^{[0,0.5]}\;\;$}
            \addplot[ferngreen, mark=square] table [x={in_r}, y={err_Hu_all_ekp_linreg_Hg_0.0_ekp_linreg}, col sep=comma] {data/lorenz/errors-50-1.0e-6-all-1-1.csv};
            \addlegendentry{$H^{[0,1]}\;\;$}
        
        \nextgroupplot[xlabel={$r=s$}, ylabel={$\norm{F^\dagger - \mathrm{sample\;mean}}$}, ymin = 0, ymax = 31, grid=major, legend to name=grouplegend2, title={200 samples, informative prior}, every axis plot/.append style={thick, mark options={solid, scale=0.9}}]
            \addplot[winered, mark=triangle] table [x={in_r}, y={err_pca_u_pca_g}, col sep=comma] {data/lorenz/errors-200-1.0-all-1-1.csv};
            \addlegendentry{$\mathrm{PCA}\;\;$}
            \addplot[black, mark=x] table [x={in_r}, y={err_Hu_0.0_ekp_linreg_Hg_0.0_ekp_linreg}, col sep=comma] {data/lorenz/errors-200-1.0-all-1-1.csv};
            \addlegendentry{$\alpha=0\;\;$}
            \addplot[black, mark=o] table [x={in_r}, y={err_Hu_0.5_ekp_linreg_Hg_0.0_ekp_linreg}, col sep=comma] {data/lorenz/errors-200-1.0-all-1-1.csv};
            \addlegendentry{$\alpha=0.5\;\;$}
            \addplot[black, mark=square] table [x={in_r}, y={err_Hu_1.0_ekp_linreg_Hg_0.0_ekp_linreg}, col sep=comma] {data/lorenz/errors-200-1.0-all-1-1.csv};
            \addlegendentry{$\alpha=1\;\;$}
            \addplot[ferngreen, mark=o] table [x={in_r}, y={err_Hu_half_ekp_linreg_Hg_0.0_ekp_linreg}, col sep=comma] {data/lorenz/errors-200-1.0-all-1-1.csv};
            \addlegendentry{$H^{[0,0.5]}\;\;$}
            \addplot[ferngreen, mark=square] table [x={in_r}, y={err_Hu_all_ekp_linreg_Hg_0.0_ekp_linreg}, col sep=comma] {data/lorenz/errors-200-1.0-all-1-1.csv};
            \addlegendentry{$H^{[0,1]}\;\;$}

        \nextgroupplot[xlabel={$r=s$}, ymin = 0, ymax = 31, grid=major, legend to name=grouplegend2, title={200 samples, uninformative prior}, every axis plot/.append style={thick, mark options={solid, scale=0.9}}, yticklabels={}]
            \addplot[winered, mark=triangle] table [x={in_r}, y={err_pca_u_pca_g}, col sep=comma] {data/lorenz/errors-200-1.0e-6-all-1-1.csv};
            \addlegendentry{$\mathrm{PCA}\;\;$}
            \addplot[black, mark=x] table [x={in_r}, y={err_Hu_0.0_ekp_linreg_Hg_0.0_ekp_linreg}, col sep=comma] {data/lorenz/errors-200-1.0e-6-all-1-1.csv};
            \addlegendentry{$\alpha=0\;\;$}
            \addplot[black, mark=o] table [x={in_r}, y={err_Hu_0.5_ekp_linreg_Hg_0.0_ekp_linreg}, col sep=comma] {data/lorenz/errors-200-1.0e-6-all-1-1.csv};
            \addlegendentry{$\alpha=0.5\;\;$}
            \addplot[black, mark=square] table [x={in_r}, y={err_Hu_1.0_ekp_linreg_Hg_0.0_ekp_linreg}, col sep=comma] {data/lorenz/errors-200-1.0e-6-all-1-1.csv};
            \addlegendentry{$\alpha=1\;\;$}
            \addplot[ferngreen, mark=o] table [x={in_r}, y={err_Hu_half_ekp_linreg_Hg_0.0_ekp_linreg}, col sep=comma] {data/lorenz/errors-200-1.0e-6-all-1-1.csv};
            \addlegendentry{$H^{[0,0.5]}\;\;$}
            \addplot[ferngreen, mark=square] table [x={in_r}, y={err_Hu_all_ekp_linreg_Hg_0.0_ekp_linreg}, col sep=comma] {data/lorenz/errors-200-1.0e-6-all-1-1.csv};
            \addlegendentry{$H^{[0,1]}\;\;$}
        \end{groupplot}
    
        \node at (6.5,-5.9)[below, yshift=-2\pgfkeysvalueof{/pgfplots/every axis title shift}]{\ref*{grouplegend2}};
    \end{tikzpicture}
    \vspace{-.1cm}
    \caption{Distance from the sample mean to the true parameter for the Lorenz `96 problem, with  either PCA (for input PCA) or 0-LIS (for other input reduction methods) in the output space---note that, even in the full space, this distance is not zero}
    \vspace{-.2cm}
    \label{fig:res:lorenz}
\end{figure}

The results of the experiment are shown in \cref{fig:res:lorenz}. We note that PCA performs reasonably well when the prior is informative (where a prior-based approach makes sense), but converges very slowly for uninformative priors. Among the likelihood-informed techniques, either $\alpha=0$ or $\alpha=1$ always performs worst. The strongest performance comes from the accumulated diagnostic matrices, especially when the number of samples per $\alpha$ is very small.

\section{Discussion and recommendations}
For many practical problems, LIS methods have to handle approximate gradients, a low number of samples, and noisy evaluations. Our experiments made clear that theory and results became divided once the algorithms were tested in these increasingly restrictive scenarios. Therefore, we summarize some practical recommendations evidenced by the current experiments.

\subsection{The recommended dimension reduction procedure}
The optimal configuration based on our experiments is as follows, using samples from tempered posteriors $0=\alpha_0 < \dots < \alpha_k$. 
    \begin{enumerate}
        \item \textbf{Whiten.} Whiten the input and output spaces (see \cref{rem:intro:white}). If needed, perform some dimension reduction with PCA to a reasonably sized space where LIS could be applied.
        \item \textbf{Reduce inputs.} Use accumulated $\alpha$-LIS: $H^{[0,\alpha_k]}$.
        \item \textbf{Reduce outputs.} Consider accumulated LIS on all outputs $H^{[0,\alpha_k]}$ if the manifold optimization is computationally feasible. Otherwise, perform DF-LIS ($\alpha=0$).
    \end{enumerate}

\subsection{Other findings}
DB-LIS ($\alpha=1$) is optimal in theory and works well in some practical settings closely approximating the theory (with large sample sizes of the posterior, available derivatives and deterministic model evaluations), such as \cref{fig:res:linear-simple}. However, we see suboptimal performance across multiple experiments outside of these assumptions. It is also the most expensive variant to compute, since samples from the full posterior distribution are needed. Meanwhile, DF-LIS $(\alpha=0)$ is generally robust and well-performing across all scenarios, although it is never the best performer. It is advantageous in that the computation in the output space (after whitening) has a closed-form solution, whereas current algorithms for output space reduction with $\alpha>0$ are expensive.

Our newly proposed $\alpha$-LIS $(0<\alpha<1)$, exemplified by $\alpha=0.5$ in experiments, often approaches the performance of $\alpha=1$ in the theoretically justified settings. In more challenging settings, we find that it even outperforms both extremal values $\alpha=0$ and $\alpha=1$ (see \cref{fig:res:linear-noise,fig:res:darcy})---especially with a low number of samples, as in \cref{fig:res:linear-combined-H}.

The benchmark PCA is scalable to very high dimensions and performance is natively robust under noisy forward maps and lack of derivatives. In the output space, we see good (but usually suboptimal) performance overall. In the input space, performance is (unsurprisingly) strongly tied to how informative the prior is. It can perform well where dimensions correspond to pointwise representations of functions on a numerical grid, which makes for a strong prior (see \cref{sec:ces:lorenz}). 

The derivative-free extension allows the methods to be applicable even without gradients of $G$. A significant cost of accuracy is paid here, as seen in \cref{fig:res:linexp-comp-grads}. Nevertheless, the derivative-free extension to LIS still generally improves significantly upon using PCA. The accumulation extension appears to be very effective for robustness, as it utilizes all data, and does not share the the performance sensitivity observed for the $1$-LIS; it is the best performing method for the most challenging scenarios (noisy evaluations, no derivatives, small sample sizes), as it can effectively utilize all of the available data. Moreover, given enough samples, \cref{fig:res:darcy-num-samples} shows that $\alpha$-LIS, even without accumulation, can largely overcome the issues of noise and approximate gradients.

Our methodology can be transplanted into the CES framework: by reducing dimensions before the emulation stage, the emulator becomes cheaper \emph{and} more accurate. \Cref{sec:ces:lorenz} demonstrated that performance for posterior sampling also translates to performance in training emulators for posterior sampling under the most challenging setting: derivative-free and with noisy forward map.

\section{Conclusions and outlook}
With this investigation we advance the utility of likelihood-informed subspaces in modern Bayesian workflows. In particular, we prove a generalization of current approaches: $\alpha$-LIS, with $\alpha\in[0,1]$, links DF-LIS ($\alpha=0$) and DB-LIS ($\alpha=1$) methods. Our method aims to reduce the KL divergence between true and reduced posteriors with $y$ averaged over $p_\alpha$, a distribution that naturally induces a relationship to tempered posteriors $\pi^{(\alpha)}(x\mid y^\dagger)$. These are a common intermediate stage for Bayesian algorithms. We propose two extensions: derivative-free $\alpha$-LIS and learning a subspace that utilizes information from an entire sequence of tempered distributions with $\alpha_0<\dots<\alpha_k$. 

We demonstrate through extensive numerical experiments that in settings consistent with the theory, $\alpha$-LIS performs almost as well as 1-LIS. In challenging settings outside of the theory (small sample sizes, forward map noise, or a lack of derivatives), $\alpha$-LIS with accumulation shows superior performance, as it makes better use of the available data than alternative methods. We show that these findings translate from direct posterior sampling to the CES framework, which includes emulator training in the subspace and naturally provides the tempered samples needed for $\alpha$-LIS.

There are a few directions that warrant further investigation. \begin{itemize}
    \item It would be interesting to understand why larger $\alpha$s (particularly, $\alpha=1$) are far more susceptible to noise and approximations than $\alpha=0$. Is there an optimal $\alpha$ for a given setting?
    \item The statistical-linearization (SL) approach we currently use to approximate derivatives (see \cref{sec:tempered:gradfree}) is rather rudimentary. Many other techniques should be attempted and compared, such as localized SL \cite{WACKER2026116756} or \emph{ensemble-based gradient inference} \cite{doi:10.1137/22M1533281}. These techniques have the potential to better approximate gradients for nonlinear functions in regimes with larger sample sizes, but carry the risk of a curse of dimensionality.
    \item Output space reduction with $\alpha>0$ triggers a costly optimization procedure. We describe such algorithms in \cref{sec:opt}; more work is needed to make those ideas more efficient. Moreover, our experiments show that $\alpha>0$ in the output space improves performance in settings close to the theory, but loses its edge in more constrained problems. It would be interesting to study whether this is an artifact of the optimization procedure, or a fundamental property.
    \item It is possible to choose the reduced dimensions $r$ and $s$ based on eigenvalues of the diagnostic matrix---see, e.g., \cite{BapMarZah22_pre,CuiTon22}. However, we found these thresholds sometimes unreliable in problems with noise, few samples, and approximate gradients. It would be useful to develop robust (adaptive) techniques to select $r$ and $s$ based on an accuracy parameter.
    \item Current $\alpha$-LIS methods construct the diagnostic matrix, and so are naturally limited in memory. In contrast, PCA has many matrix-free variants that are directly computable from very high-dimensional samples, increasing its applicability. An interesting avenue for future work would be to find $\alpha$-LIS spaces without ever explicitly constructing diagnostic matrices.
    \item Motivated in part by the CES application, we used cheap, approximate ensemble samplers to estimate diagnostic matrices. For other DB-LIS methods, algorithms have been proposed to iteratively draw some samples in the current reduced space and then refine the reduced space with those samples~\cite{CuiTon22,CuiMarMarSolSpa14,zahm2022certified,CUI2016363}. Similar methods could be applied to $\alpha$-LIS.
    \item The problem has thus far been framed ``offline'', reducing dimensions after tempered samples have been collected. Another possibility is to do online dimension reduction within the tempering method itself. We envision that this may aid in convergence of the tempering algorithm, as one can construct proposals or updates for the next stage $\alpha_{k+1}$ that are constrained within the currently most informative space based on $\{\alpha_0, \ldots, \alpha_k\}$. It may provide an adaptive alternative to static optimal subspaces (e.g., DF-LIS~\cite{BapMarZah22_pre}, optimal ensemble subspaces~\cite{HarSch25}), or provide a nonlocal, likelihood-informed annealing schedule.
    \item Given the synergies with the CES framework---both $\alpha$-LIS and CES need essentially the same tempered samples---the next step would be to apply $\alpha$-LIS within a large-scale CES pipeline on a true system. For example, building on the success of CES for uncertainty quantification in an intermediate-complexity climate model \cite{DunGarSchStu21,DunHowSch22}, $\alpha$-LIS could extend application of this approach to full-complexity Earth system models with possibly hundreds of parameters.
\end{itemize}


\section*{Data availability}
    The data that support the findings of this study are openly available at the Zenodo repository \url{https://doi.org/10.5281/zenodo.20298018}. A user-friendly implementation of our algorithm is available in the \texttt{CalibrateEmulateSample.jl} Julia package \cite{Dun_etal24}.

\section*{Funding}
    The work of AB was funded by the Research Foundation -- Flanders (FWO) (grants~1169725N and V408725N). ORAD is supported by Schmidt Sciences, LLC, the U.S. National Science Foundation (grant AGS-1835860), and ONR (grant N00014-23-16161-2654).

\appendix
\section{Derivation of $\alpha$-LIS} \label{sec:tempered-deriv}

Similarly to~\cite[pp.\ 34--35]{BapMarZah22_pre}, we look to derive diagnostic matrices for the objective \eqref{eq:diag:E}. First, we upper-bound our objective with two terms that each depend exclusively on $U_r$ or on $V_s$. We start with the equality
\begin{align*}
    \KL{\pi(\cdot\mid y)}{\pi^*(\cdot\mid y)} &= \int\pi(x\mid y)\log\frac{\pi(x\mid y)}{\pi^*(x\mid y)}\,\mathrm dx\\
    &= \int\pi(x\mid y)\log\frac{\pi(y\mid x)\pi(x)\pi(y_s)}{ \pi(y_s\mid x_r)\pi(x)\pi(y)}\,\mathrm dx\\
    &= \int\pi(x\mid y)\Big(\log\frac{\pi(y\mid x)}{\pi(y)}-\log\frac{\pi(y_s\mid x_r)}{\pi(y_s)}\Big)\,\mathrm dx\\
     &= \int\pi(x\mid y)\Big(\log\frac{\pi(x, y)}{\pi(x)\pi(y)}-\log\frac{\pi(x_r, y_s)}{\pi(x_r)\pi(y_s)}\Big)\,\mathrm dx\\
     &= \int\pi(x\mid y) \Big(i(x; y) - i(x_r; y_s)\Big)\,\mathrm dx,
\end{align*}
where
\begin{equation}
    i(x; y) \coloneqq \log\frac{\pi(x, y)}{\pi(x)\pi(y)}
\end{equation}
denotes the \emph{pointwise mutual information} (PMI) at a point $(x, y)$. Using the chain rule for PMI  we upper-bound
\begin{align*}
    i(x; y) - i(x_r; y_s) &= \left[i(x_r; y_s) + i(x_r; y_\perp\mid y_s) + i(x_\perp; y_s \mid x_r) + i(x_\perp; y_\perp \mid x_r, y_s)\right] - i(x_r; y_s)\\
    &= i(x_r; y_\perp\mid y_s) + \left[i(x_\perp; y \mid x_r) - i(x_\perp; y_\perp\mid x_r, y_s)\right] + i(x_\perp; y_\perp \mid x_r, y_s)\\
    &= [i(x; y_\perp \mid y_s) - i(x_\perp; y_\perp \mid x_r, y_s)] + i(x_\perp; y\mid x_r)
\end{align*}
with
\begin{equation}
    i(x; y\mid z) \coloneqq \log\frac{\pi(x, y\mid z)}{\pi(x\mid z)\pi(y\mid z)},
\end{equation}
whence
\begin{equation}
\begin{aligned}
    &\KL{\pi(\cdot\mid y)}{\pi^*(\cdot|y)}\\
    &\qquad= \int\pi(x\mid y) i(x_\perp; y\mid x_r)\,\mathrm dx + \int\pi(x\mid y) i(x; y_\perp\mid y_s)\,\mathrm dx - \int \pi(x\mid y)i(x_\perp;y_\perp\mid x_r,y_s)\,\mathrm dx\\
    &\qquad\le \int\pi(x\mid y) i(x_\perp; y\mid x_r)\,\mathrm dx + \int\pi(x\mid y) i(x; y_\perp\mid y_s)\,\mathrm dx,
\end{aligned}
\end{equation}
where the last integral of the first equality can be rewritten as an expected KL divergence and, hence, is positive, justifying the inequality. We can now bound our target expectation \eqref{eq:diag:E} as
\begin{equation} \label{eq:diag:two-terms}
\begin{aligned}
    \mathbb E_{y\sim p_\alpha}\left[\KL{\pi(\cdot\mid y)}{\pi^*(\cdot\mid y)} \right] &\le \iint\pi(x\mid y)i(x_\perp; y \mid x_r)\,\mathrm dx\,p_\alpha(y)\,\mathrm dy\\
    &\quad + \iint\pi(x\mid y)i(x; y_\perp \mid y_s)\,\mathrm dx\,p_\alpha(y)\,\mathrm dy.
\end{aligned}
\end{equation}
The first term depends only on $U_r$ and the second only on $V_s$, so we will bound them separately. By optimizing the bounds on each term, we will then recover our reduced subspaces.

\paragraph{First term.}
We start by bounding
\begin{align*}
    \int\pi(x\mid y) i(x_\perp; y\mid x_r)\,\mathrm dx &= \iint\pi(x_\perp\mid y,x_r)i(x_\perp; y\mid x_r)\,\mathrm dx_\perp \pi(x_r\mid y)\,\mathrm dx_r\\
    &= \iint\frac{\pi(x_\perp, y\mid x_r)}{\pi(x_\perp\mid x_r)\pi(y\mid x_r)}i(x_\perp; y\mid x_r)\pi(x_\perp\mid x_r)\,\mathrm dx_\perp \pi(x_r\mid y)\,\mathrm dx_r\\
    &\eqqcolon \iint h(x_\perp) \log h(x_\perp)\pi(x_\perp\mid x_r)\,\mathrm dx_\perp\pi(x_r\mid y)\,\mathrm dx_r\\
    &\le \int\frac{\bar C}2\int\norm{\nabla_{x_\perp}\log h(x_\perp)}^2 h(x_\perp)\pi(x_\perp\mid x_r)\,\mathrm dx_\perp\pi(x_r\mid y)\,\mathrm dx_r,
\end{align*}
defining $h(x_\perp)\coloneqq \exp i(x_\perp; y\mid x_r)$ and using \cref{ass:prior}, since $\int h(x_\perp)\pi(x_\perp\mid x_r)\,\mathrm dx_\perp = \int\frac{\pi(x_\perp,y\mid x_r)}{\pi(y\mid x_r)}\,\mathrm dx_\perp=1$. We can simplify:
\begin{align*}
    \int\pi(x\mid y) i(x_\perp; y\mid x_r)\,\mathrm dx &\le \frac{\bar C}2 \iint\norm{\nabla_{x_\perp}\log h(x_\perp)}^2 h(x_\perp)\pi(x_\perp\mid x_r)\,\mathrm dx_\perp\pi(x_r\mid y)\,\mathrm dx_r\\
    &= \frac{\bar C}2\iint\norm{\nabla_{x_\perp} \log \frac{\pi(x_\perp, y\mid x_r)}{\pi(x_\perp\mid x_r)\pi(y\mid x_r)}}^2\pi(x_\perp\mid y,x_r)\,\mathrm dx_\perp\pi(x_r\mid y)\,\mathrm dx_r\\
    &= \frac{\bar C}2\iint\norm{\nabla_{x_\perp}\log\frac{\pi(y\mid x)}{\pi(y\mid x_r)}}^2\pi(x_\perp\mid y,x_r)\,\mathrm dx_\perp\pi(x_r\mid y)\,\mathrm dx_r\\
    &= \frac{\bar C}2\iint\norm{\nabla_{x_\perp}\log\pi(y\mid x)}^2\pi(x_\perp\mid y,x_r)\,\mathrm dx_\perp\pi(x_r\mid y)\,\mathrm dx_r\\
    &= \frac{\bar C}2\int\norm{\nabla_{x_\perp}\log\pi(y\mid x)}^2\pi(x\mid y)\,\mathrm dx = \frac{\bar C}2\int\norm{\nabla_x\log\pi(y\mid x)U_\perp}^2\pi(x\mid y)\,\mathrm dx.
\end{align*}
The first term in \eqref{eq:diag:two-terms} is thus bounded, by filling in \eqref{eq:bip:likelihood}, as
\begin{equation} \label{eq:diag:X-iint}
\begin{aligned}
    &\iint\pi(x\mid y)i(x_\perp; y \mid x_r)\,\mathrm dx\,p_\alpha(y)\,\mathrm dy\\
    &\quad \le \frac{\bar C}2\iint\norm{\nabla_x\log\pi(y\mid x)U_\perp}^2\pi(x\mid y)p_\alpha(y)\,\mathrm dx\,\mathrm dy\\
    &\quad = \frac{\bar C}2\Tr\left[{U_\perp^\top \left\{\iint\nabla G(x)^\top \Gamma^{-1}(y-G(x))(y-G(x))^\top \Gamma^{-1}\nabla G(x)\pi(x\mid y)p_\alpha(y)\,\mathrm dx\,\mathrm dy\right\}U_\perp}\right]\\
    &\quad\propto \frac{\bar C}2\Tr\biggl[U_\perp^\top \biggl\{\int\nabla G(x)^\top \Gamma^{-1}\left((1-\alpha)\Gamma + \alpha^2(y^\dagger-G(x))(y^\dagger-G(x))^\top \right)\\
    &\qquad\qquad\qquad\qquad\qquad\qquad\qquad\qquad\qquad\qquad\Gamma^{-1}\nabla G(x)\pi(x)\pi(y^\dagger\mid x)^\alpha\,\mathrm dx\biggr\}U_\perp\biggr].
\end{aligned}
\end{equation}
To derive this, we used the fact that $f_\mathrm{gauss}(x; m, C)^\alpha \propto f_\mathrm{gauss}(x; m, \alpha^{-1}C)$ to compute
\begin{align*}
    &\int (y-G(x))(y-G(x))^\top \pi(x\mid y)p_\alpha(y)\,\mathrm dy\\
    &\quad \propto \int (y-G(x))(y-G(x))^\top \pi(x)f_\mathrm{gauss}\left(y; G(x), \Gamma\right)f_\mathrm{gauss}\left(y; y^\dagger, \frac{1-\alpha}\alpha\Gamma\right)\,\mathrm dy\\
    &\quad = \int (y-G(x))(y-G(x))^\top \pi(x) f_\mathrm{gauss}\left(y^\dagger; G(x), \Gamma + \frac{1-\alpha}\alpha\Gamma\right)f_\mathrm{gauss}\left(y; (1-\alpha)G(x) + \alpha y^\dagger; (1-\alpha)\Gamma\right)\,\mathrm dy\\
    &\quad = \pi(x)f_\mathrm{gauss}\left(y^\dagger; G(x), \alpha^{-1}\Gamma\right)\int(y-G(x))(y-G(x))^\top f_\mathrm{gauss}\left(y; (1-\alpha)G(x)+\alpha y^\dagger; (1-\alpha)\Gamma\right)\,\mathrm dy\\
    &\quad \propto \pi(x)\pi(y^\dagger\mid x)^\alpha\int(y-G(x))(y-G(x))^\top f_\mathrm{gauss}\left(y; (1-\alpha)G(x)+\alpha y^\dagger; (1-\alpha)\Gamma\right)\,\mathrm dy.
    \end{align*}
    Evaluating this integral is taking the expectation of a quadratic function $(y-G(x))(y-G(x))^\top $ over a Gaussian PDF. Decomposing $y - G(x) = (y-\mu)+(\mu-G(x))$, for $\mu = (1-\alpha)G(x) +\alpha y^\dagger$, and taking $\Sigma=(1-\alpha)\Gamma$, then in expectation we have
    \begin{align*}
    &\mathbb{E}_{y\sim\mathcal N(\mu,\Sigma)}((y-G(x))(y-G(x))^\top)\\
    &\qquad= \mathbb{E}_{y\sim\mathcal N(\mu,\Sigma)}((y-\mu)(y-\mu)^\top) +(\mu-G(x))(\mu -G(x))^\top \\
    &\qquad= \Sigma + (\mu-G(x))(\mu -G(x))^\top\\
    &\qquad= (1-\alpha)\Gamma + \left((1-\alpha)G(x) + \alpha y^\dagger - G(x)\right)\left((1-\alpha)G(x) + \alpha y^\dagger) - G(x)\right)^\top\\
    &\qquad= (1-\alpha)\Gamma + \alpha^2(y^\dagger-G(x))(y^\dagger-G(x) )^\top.
\end{align*}
Substituting this in we arrive at
\begin{equation}
    \int (y-G(x))(y-G(x))^\top \pi(x\mid y)p_\alpha(y)\,\mathrm dy \propto \pi(x)\pi(y^\dagger\mid x)^\alpha \left[(1-\alpha)\Gamma + \alpha^2(y^\dagger-G(x))(y^\dagger-G(x))^\top \right].
\end{equation}

Following~\cite[Proposition 2]{BapMarZah22_pre} (see also~\cite[Corollary 4.3.39]{Horn_Johnson_2012}), the right-hand side of \eqref{eq:diag:X-iint} is minimized when the columns of $U_r$ are the $r$ leading eigenvectors of
\begin{equation}
    H_{X\mid y^\dagger}^\alpha \coloneqq \int\nabla G(x)^\top \Gamma^{-1}\left((1-\alpha)\Gamma + \alpha^2(y^\dagger-G(x))(y^\dagger-G(x))^\top \right)\Gamma^{-1}\nabla G(x)\pi(x)\pi(y^\dagger\mid x)^\alpha\,\mathrm dx.
\end{equation}
In the case $\alpha=0$, this matches the diagnostic matrix from~\cite{BapMarZah22_pre} when both are applied after whitening (see \cref{rem:intro:white}): $H_{\bar X\mid y^\dagger}^0 = H_{\bar X}$. For $\alpha=1$, we have $H_{\bar X\mid y^\dagger}^1 = H_{\bar X\mid y^\dagger}$. For $0<\alpha<1$, $H_{\bar X\mid y^\dagger}^\alpha$ interpolates non-linearly between these two diagnostic matrices.

\paragraph{Second term.}
Now define $h(x)\coloneqq \exp i(x; y_\perp \mid y_s)$ and consider the second term in \eqref{eq:diag:two-terms}:
\begin{align*}
    &\iint\pi(x\mid y)p_\alpha(y)i(x; y_\perp \mid y_s)\,\mathrm dx\,\mathrm dy\\
    &\qquad = \iint h(x)\log h(x)\pi(x\mid y_s)\,\mathrm dx\,p_\alpha(y)\,\mathrm dy \\
    &\qquad \le \frac{\bar C}2\iint h(x)\norm{\nabla_x\log h(x)}^2\pi(x\mid y_s)\,\mathrm dx\,p_\alpha(y)\,\mathrm dy  = \frac{\bar C}2\iint \norm{\nabla_x\log h(x)}^2\pi(x\mid y)\,\mathrm dx\,p_\alpha(y)\,\mathrm dy\\
    &\qquad = \frac{\bar C}2\iint\norm[\Big]{\nabla_x\log\frac{\pi(x, y_\perp\mid y_s)}{\pi(x\mid y_s)}}^2\pi(x\mid y)\,\mathrm dx\,p_\alpha(y)\,\mathrm dy = \frac{\bar C}2\iint\norm{\nabla_x\log\pi(y_\perp\mid x, y_s)}^2\pi(x\mid y)\,\mathrm dx\,p_\alpha(y)\,\mathrm dy\\ 
    &\qquad = \frac{\bar C}2\iint\norm{\nabla_x\log \pi(y\mid x) - \nabla_x\log \pi(y_s\mid x)}^2\pi(x\mid y)\,\mathrm dx\,p_\alpha(y)\,\mathrm dy\\
    &\qquad = \frac{\bar C}2\iint \norm{\nabla G(x)^\top\Gamma^{-1}(y-G(x))-\nabla G(x)^\top V_s(V_s^\top\Gamma V_s)^{-1}V_s^\top(y-G(x))}^2\pi(x\mid y)\,\mathrm dx\,p_\alpha(y)\,\mathrm dy,
\end{align*}
where we used \cref{ass:joint} (given that $\int h(x)\pi(x\mid y_s)\,\mathrm dx = \int\pi(x\mid y)\,\mathrm dx = 1$) and \cref{cor:joint}. At this point, we introduce $P_s = \Gamma^{-1} - V_s(V_s^\top \Gamma V_s)^{-1}V_s^\top = \Gamma^{-1} - (V_sV_s^\top  \Gamma V_sV_s^\top )^+$ (with ${}^+$ denoting the Moore--Penrose pseudoinverse). Note that $P_s$ is symmetric and satisfies the idempotent relationship $P_s\Gamma P_s = P_s$, making it a $\Gamma$-orthogonal projection. We continue:
\begin{align*}
    &\iint\pi(x\mid y)p_\alpha(y)i(x; y_\perp \mid y_s)\,\mathrm dx\,\mathrm dy\\
    &\qquad\le \frac{\bar C}2\iint \norm{(y-G(x))^\top P_s\nabla_x G(x)}^2\pi(x\mid y)\,\mathrm dx\,p_\alpha(y)\,\mathrm dy\\
    &\qquad = \frac{\bar C}{2}\Tr\left[\iint\nabla G(x)^\top P_s(y-G(x))(y-G(x))^\top P_s\nabla G(x)\pi(x\mid y)\,\mathrm dx\,p_\alpha(y)\,\mathrm dy\right]\\
    &\qquad = \frac{\bar C}2\Tr\left[\int\nabla G(x)^\top P_s\left((1-\alpha)\Gamma + \alpha^2(y^\dagger-G(x))(y^\dagger-G(x))^\top \right)P_s\nabla G(x)\pi(x)\pi(y^\dagger\mid x)^\alpha\,\mathrm dx\right].
\end{align*}
We then define
\begin{equation}
    J(V_s) \coloneqq \Tr\left[\int\nabla G(x)^\top P_s\left((1-\alpha)\Gamma + \alpha^2(y^\dagger-G(x))(y^\dagger-G(x))^\top \right)P_s\nabla G(x)\pi(x)\pi(y^\dagger\mid x)^\alpha\,\mathrm dx\right]
\end{equation}
and recall that minimizing $J$ is needed for minimizing an upper bound on $\mathbb E_{y\sim p_\alpha}[\KL{\pi(\cdot\,\mid y)}{\pi^*(\cdot\,\mid y)}]$.

\section{Solving the optimization problem for output space reduction} \label{sec:opt}
Since $\alpha$-LIS does not provide a closed-form diagnostic matrix that provides a solution for the reduced output space, we have to find ways to numerically solve the optimization problem \eqref{eq:tempered:J}.

\subsection{Optimizing $J$ using manifold optimization} \label{sec:opt:manifold}
Our main approach for minimizing \eqref{eq:tempered:J} is to use numerical optimization algorithms over the Grassmann manifold. For this, it will be helpful to compute gradients of $J$. As $J$'s only dependence on $V_s$ appears through $P_s$, we start by computing its directional derivative in direction $H$, which we will later use to build the Euclidean and Riemannian gradients for $J$.

Setting $A_s\coloneqq V_s^\top \Gamma V_s$,
\begin{equation}
    D_{V_s}P_s[H]=-HA_s^{-1}V_s^\top  -V_sA_s^{-1}H^\top  + V_s(A_s^{-1}(H^\top \Gamma V_s + V_s^\top \Gamma H)A_s^{-1})V_s^\top .
\end{equation}
Considering a symmetric test matrix $C=C^\top $, we can compute (using the cyclic invariance of the trace) that
\begin{align*}
    \Tr\left[CD_{V_s}P_s[H]\right] &= \Tr\left[-CHA_s^{-1}V_s^\top -CV_sA_s^{-1}H^\top  + CV_s(A_s^{-1}(H^\top \Gamma V_s + V_s^\top \Gamma H)A_s^{-1})V_s^\top \right]\\
    &=\Tr\left[(-2CV_sA_s^{-1}+2\Gamma V_sA_s^{-1}V_s^\top CV_sA_s^{-1})^\top H\right]\\
    &=-2\Tr\left[(\Gamma(\Gamma^{-1} - V_sA_s^{-1}V_s^\top )CV_sA_s^{-1})^\top H\right]\\
    &=-2\Tr\left[(\Gamma P_sC V_s(V_s^\top \Gamma V_s)^{-1})^\top H\right].
\end{align*}
Then we may write, with further application of cyclic invariance of the trace,
\begin{equation}
    D_{V_s}J(V_s)[H]= \Tr\left[ \left(
        \int W_s(x)\,\pi(x)\pi(y^\dagger\mid x)^\alpha\,\mathrm dx
    \right)D_{V_s} P_s[H] + \left(
        \int W_s(x)^\top \,\pi(x)\pi(y^\dagger\mid x)^\alpha\,\mathrm dx 
    \right) D_{V_s} P_s[H]\right],
\end{equation}
where
\begin{equation}
    W_s(x) \coloneqq \nabla G(x)\nabla G(x)^\top \,P_s\,\left((1-\alpha)\Gamma + \alpha^2(y^\dagger-G(x))(y^\dagger-G(x))^\top \right).
\end{equation}
The Euclidean gradient $\nabla_{V_s}(\cdot)$ is defined through the relationship $ D_{V_s}J(V_s)[H]= \Tr\left[(\nabla_{V_s}J(V_s))^\top H\right]$. To this end, we define $\mathrm{Sym}(W) = \frac{1}{2}(W+W^\top )$, then apply the result above with $C=2\int \mathrm{Sym}(W_s(x))\,\pi(x)\pi(y^\dagger\mid x)^\alpha\,\mathrm dx$. We may then read off
\begin{equation} \label{eq:diag:nabla}
\begin{aligned}
    \nabla_{V_s} J(V_s) &= -4\Gamma P_s \left(\int \mathrm{Sym}(W_s(x))\,\pi(x)\pi(y^\dagger\mid x)^\alpha\,\mathrm dx\right)\,V_s(V_s^\top  \Gamma V_s)^{-1}.
\end{aligned}
\end{equation}
The Riemannian gradient is now defined as
\begin{equation} \label{eq:diag:gradGr}
    \mathrm{grad}_\mathrm{Gr} J(V_s) = (I - V_sV_s^\top )\nabla_{V_s}J(V_s).
\end{equation}
The expression \eqref{eq:diag:gradGr} allows us to use gradient-based optimizers to find a good $V_s$. In this paper, we will use the Julia package \texttt{Manopt.jl} \cite{Bergmann2022} to find a $V_s$.

\subsection{Optimizing $J$ incrementally} \label{sec:opt:incr}
Compared to using the leading eigenvectors of $H_{X\mid y^\dagger}^\alpha$ or $H_{\bar Y}$, numerically optimizing $J$ to find $V_s$ has two disadvantages: \begin{itemize}
    \item The optimization procedure tends to be significantly more expensive than simply computing eigenvectors of a diagnostic matrix.
    \item Subspaces are not ``nested''. When working with a diagnostic matrix, $V_s = [V_{s-1}\; v_s]$ by definition. This means that one eigendecomposition of the diagnostic matrix gives you access to \emph{all the reduced bases} without needing to do additional computations. In contrast, the minimizer of $J$ in $s$ dimensions might be completely unrelated to the minimizer in $s-1$ dimensions.
\end{itemize}

To remedy the latter issue, we can sacrifice some optimality and use a greedy approximation: we incrementally build nested subspaces $\widetilde V_s$. We start with $\widetilde V_1 = V_1$, the one-dimensional minimizer of $J$. Then we iteratively build $\widetilde V_s = [\widetilde V_{s-1}\; \widetilde v_s]$ with
\begin{equation} \label{eq:sec:opt:incr:argmin}
    \widetilde v_s \coloneqq \argmin_{\norm{v_s} = 1, \widetilde V_{s-1}^\top v_s = 0} J([\widetilde V_{s-1}\; v_s]).
\end{equation}
While $\widetilde V_s$ is not guaranteed to minimize $J$ when $s>1$, we now have a nested set of reduced bases similar to those for diagnostic matrices.

To solve this minimization problem, we first write
\begin{equation}
    v_s \coloneqq \widetilde V_\perp u_s,
\end{equation}
where\footnote{When $s-1=0$, we define $\widetilde V_\perp=I$.} $\widetilde V_{s-1}^\top \widetilde V_\perp =0$, guaranteeing that the orthogonality constraint in \eqref{eq:sec:opt:incr:argmin} is satisfied. Then, we solve
\begin{equation} \label{eq:sec:opt:incr:argmin-u}
    \widetilde u_s \coloneqq \argmin_{\norm{u_s} = 1} J([\widetilde V_{s-1}\; \widetilde V_\perp u_s]).
\end{equation}
To do this with a numerical solver, we can use the fact that
\begin{equation} \label{eq:sec:opt:incr:nabla}
    \nabla_{u_s}J([\widetilde V_{s-1}\; \widetilde V_\perp u_s]) = \widetilde V_\perp^\top \nabla_{[\widetilde V_{s-1}\; \widetilde V_\perp u_s]}J([\widetilde V_{s-1}\; \widetilde V_\perp u_s])e_s
\end{equation}
and, as before,
\begin{equation} \label{eq:sec:opt:incr:grad}
    \mathrm{grad}_\mathrm{Gr}J([\widetilde V_{s-1}\; \widetilde V_\perp u_s]) = (I-u_su_s^\top )\nabla_{u_s}J([\widetilde V_{s-1}\; \widetilde V_\perp u_s]),
\end{equation}
where the gradient is with respect to $u_s$. This incremental approximation is potentially faster than trying to optimize $V_s$ in its entirety. In addition, the spaces are now nested, which is especially beneficial when we need subspaces for multiple values of $s$.

\subsection{Optimizing $J$ using nonlinear eigenvalue problems} \label{sec:opt:nepv}
There is another way to look at the problem of minimizing $J(V_s)$. Stationary points $V_s$ of $J$ must satisfy $\mathrm{grad}_\mathrm{Gr} J(V_s) = 0$, which by \eqref{eq:diag:gradGr} means that $\nabla_{V_s}J(V_s)$ must lie in the column space of $V_s$. Points with this property satisfy
\begin{equation} \label{eq:sec:opt:nepv:stationary}
    \Gamma P_sA(V_s)V_s(V_s^\top \Gamma V_s)^{-1} = V_s\widehat\Lambda
\end{equation}
for some $\widehat\Lambda$ that may depend on $V_s$, with
\begin{equation} \label{eq:sec:opt:nepv:A}
    A(V_s) = \int \mathrm{Sym}(W_s(x))\,\pi(x)\pi(y^\dagger\mid x)^\alpha\,\mathrm dx.
\end{equation}
Let us focus on the common case where $\Gamma=I$ (which occurs, for instance, when having used the whitening techinque from \cref{rem:intro:white}). Here,
\begin{equation} \label{eq:sec:opt:nepv:J}
    J(V_s) = \Tr[P_sA(V_s)] = \Tr[A(V_s)] - \Tr[V_s^\top A(V_s)V_s]
\end{equation}
and \eqref{eq:sec:opt:nepv:stationary} becomes
\begin{equation}
    P_sA(V_s)V_s = V_s\widehat\Lambda;
\end{equation}
since the right-hand side lies in the column space of $V_s$ and the left-hand side in its complement, this implies that $P_sA(V_s)V_s = 0$ or, equivalently,
\begin{equation} \label{eq:sec:opt:nepv:nepv}
    A(V_s)V_s = V_s\Lambda
\end{equation}
for some $\Lambda$ that can be chosen diagonal, since $A(V_s)$ depends only on the column space of $V_s$. Since \eqref{eq:sec:opt:nepv:nepv} is equivalent to the stationarity condition $\mathrm{grad}_\mathrm{Gr} J(V_s) = 0$, the unconstrained optimization problem of \eqref{eq:sec:opt:nepv:J}, $V_s \coloneqq \argmin_{V_s^\top V_s=I}J(V_s)$, can equivalently be written as
\begin{equation} \label{eq:sec:opt:nepv:opt-nepv}
    V_s \coloneqq \argmax_{V_s^\top V_s=I}\,\Tr[\Lambda] - \Tr[A(V_s)] \qquad \text{s.t.} \qquad A(V_s)V_s = V_s\Lambda.
\end{equation}
This is a nonlinear eigenvalue problem with eigenvector dependency (NEPv). An equivalent derivation is possible for finding $V_\perp$ instead of $V_s$, resulting in a slightly more conventional NEPv formulation\footnote{With a slight abuse of notation, $A(V_\perp)$ is defined as $A(V_s)$ where $V_s^\top V_\perp = 0$.}:
\begin{equation} \label{eq:sec:opt:nepv:opt-nepv-compl}
    V_\perp \coloneqq \argmin_{V_\perp^\top V_\perp=I}\,\Tr[\Lambda] \qquad \text{s.t.} \qquad A(V_\perp)V_\perp = V_\perp\Lambda.
\end{equation}
Since often $s < \frac12d_y$ in practice, we mainly work with $V_s$ instead of $V_\perp$ and hence prefer \eqref{eq:sec:opt:nepv:opt-nepv}.


Something similar can be done for the greedy approach. We continue assuming that whitening has happened such that $\Gamma=I$. In \cref{sec:opt:incr}, we proposed to find $\widetilde v_s$ given $\widetilde V_{s-1}$ by using a gradient-based manifold optimizer that relies on
\begin{equation}
    \mathrm{grad}_\mathrm{Gr}J([\widetilde V_{s-1}\; \widetilde V_\perp u_s]) = -4(I - u_su_s^\top )\widetilde V_\perp^\top P_sA([\widetilde V_{s-1}\; \widetilde V_\perp u_s])\widetilde V_\perp u_s,
\end{equation}
where $P_s = (I - [\widetilde V_{s-1}\; \widetilde V_\perp u_s][\widetilde V_{s-1}\; \widetilde V_\perp u_s]^\top )$, which straightforwardly arises by combining \eqref{eq:diag:nabla}, \eqref{eq:sec:opt:incr:nabla}, \eqref{eq:sec:opt:incr:grad}, and \eqref{eq:sec:opt:nepv:A}. Recall that the columns of $\widetilde V_\perp$ are the orthogonal complement of those of $\widetilde V_{s-1}$. As before, this gradient is zero when
\begin{equation} \label{eq:sec:opt:nepv:incr-step1}
    \widetilde V_\perp^\top P_sA([\widetilde V_{s-1}\; \widetilde V_\perp u_s])\widetilde V_\perp u_s = \hat\lambda u_s
\end{equation}
for some $\hat\lambda$. It can be checked that $\widetilde V_\perp^\top P_s = (I-u_su_s^\top )\widetilde V_\perp^\top $, such that solutions have $\hat\lambda=0$ and \eqref{eq:sec:opt:nepv:incr-step1} is equivalent to, for some $\lambda$,
\begin{equation} \label{eq:sec:opt:nepv:incr-step2}
    \widetilde V_\perp^\top A([\widetilde V_{s-1}\; \widetilde V_\perp u_s])\widetilde V_\perp u_s = \lambda u_s.
\end{equation}
So to find the optimal vector $\widetilde v_s$ to add in iteration $s$, we will minimize $J([\widetilde V_{s-1}\; \widetilde V_\perp u_s])$ under the constraint given by \eqref{eq:sec:opt:nepv:incr-step2} and set $\widetilde v_s \coloneqq \widetilde V_\perp u_s$.

Solving NEPv-type problems is an active field of research and finding solutions to equations like \eqref{eq:sec:opt:nepv:opt-nepv}, \eqref{eq:sec:opt:nepv:opt-nepv-compl}, or \eqref{eq:sec:opt:nepv:incr-step2} based on the eigenproblem constraint in a robust way is challenging. We will give an algorithm only for this greedy, incremental approach. To approximate the minimizer we use a variant of the self-consistent field (SCF) iteration, a common approach for solving nonlinear eigenvalue problems. Defining $M(x)\coloneqq (1-\alpha)\Gamma + \alpha^2(y^\dagger-G(x))(y^\dagger-G(x))^\top $, $C(x)\coloneqq\nabla G(x)\nabla G(x)^\top$, and $\nu(x)\coloneqq\pi(x)\pi(y^\dagger\mid x)^\alpha$, we can decompose
\begin{equation}
    A(V) = \int \mathrm{Sym}(C(x)M(x))\nu(x)\,\mathrm dx - \sum_{j} \int \mathrm{Sym}(C(x)v_jv_j^\top M(x))\nu(x)\,\mathrm dx \eqqcolon \bar A_0 + \bar A(V),
\end{equation}
where $\{v_j\}_j$ are the columns of $V$. Here, $\bar A([V_1\;V_2]) = \bar A(V_1)+\bar A(V_2)$ for any $V_1$ and $V_2$. Now write $\widetilde v_s \coloneqq \widetilde V_\perp u_s$ such that
\begin{equation}
    A([\widetilde V_{s-1}\; \widetilde v_s])\widetilde v_s = \bar A_0 + \bar A(\widetilde V_{s-1})\widetilde v_s + \bar A(\widetilde v_s)\widetilde v_s.
\end{equation}
We now note that
\begin{equation}
\begin{aligned}
    \bar A(v)v &= -\left[\int\mathrm{Sym}(C(x)vv^\top M(x))\nu(x)\,\mathrm dx\right]v\\
    &= -\frac12\left[\int(v^\top M(x)v)C(x) + (v^\top C(x)v)M(x)\nu(x)\,\mathrm dx\right]v \eqqcolon \bar B(v)v
\end{aligned}
\end{equation}
for any $v$ and rewrite \eqref{eq:sec:opt:nepv:incr-step2} as
\begin{equation} \label{eq:sec:opt:nepv:incr-step3}
    \widetilde V_\perp^\top (\bar A_0 + \bar A(\widetilde V_{s-1}) + \bar B(\widetilde V_\perp u_s))\widetilde V_\perp u_s = \lambda u_s.
\end{equation}
We solve \eqref{eq:sec:opt:nepv:incr-step3} with a fixed-point iteration, starting with a random vector $u_s^{(0)}$ and matrix $B^{(0)}\coloneqq \bar B(\widetilde V_\perp u_s^{(0)})$. At iteration $k$, we solve the linear eigenvalue problem
\begin{equation}
    \widetilde V_\perp^\top (\bar A_0 + \bar A(\widetilde V_{s-1}) + B^{(k-1)})\widetilde V_\perp u_s^{(k)} = \lambda u_s^{(k)},
\end{equation}
choosing the eigenvector $u_s^{(k)}$ that minimizes $J([\widetilde V_{s-1}\; \widetilde V_\perp u_s^{(k)}])$. We then update
\begin{equation} \label{eq:sec:opt:nepv:iter}
    B^{(k)} \coloneqq (1-\eta^{(k)})B^{(k-1)} + \eta^{(k)} \bar B(\widetilde V_\perp u_s^{(k)}).
\end{equation}
As a measure of convergence, we define
\begin{equation}
    \varepsilon^{(k)} \coloneqq \norm{B^{(k-1)} - \bar B(\widetilde V_\perp u_s^{(k)})}.
\end{equation}
We then iterate \eqref{eq:sec:opt:nepv:iter} until $\varepsilon^{(k)}$ is below a preset threshold of $10^{-4}$, after which we use $u_s^{(k)}$ as an approximate solution $u_s$. The smoothing parameters $\eta^{(k)}$ are determined adaptively according to the following heuristic:
\begin{equation} \label{eq:sec:opt:nepv:eta}
    \eta^{(k)} \coloneqq \begin{cases}
        1.0 & \text{if reset triggered,}\\
        1.0 & \text{if perturbation triggered,}\\
        \min(\eta_{\max}^{(k)}, 1.1\eta^{(k-1)}) & \text{if $\varepsilon^{(k)}/\varepsilon^{(k-1)} \le 1.01$,}\\
        0.5\eta^{(k-1)} & \text{otherwise}
    \end{cases}
\end{equation}
and
\begin{equation}
    \eta_{\max}^{(k)} \coloneqq \begin{cases}
        \max(0.01, 0.8\eta_{\max}^{(k-1)}) & \text{if reset triggered,}\\
        \eta_{\max}^{(k-1)} & \text{otherwise.}
    \end{cases}
\end{equation}
We choose $\eta^{(0)}=\eta_{\max}^{(0)}\coloneqq0.8$. The condition ``perturbation triggered'' is met when no perturbation or reset was performed during the last $20$ iterations and $\varepsilon^{(j)}\ge\varepsilon^{(j-1)}-10^{-4}$ for all $j\ge k-20$. The condition ``reset triggered'' is met when no reset was performed during the last $40$ iterations and $k-40>\argmin_\text{$j$ since last reset}\varepsilon^{(j)}$. In the case of a reset on iteration $k$, we also set $u_s^{(k)}$ to a random vector instead of the solution to \eqref{eq:sec:opt:nepv:iter}. While \eqref{eq:sec:opt:nepv:eta} is a very ad hoc approach and we in no way claim it is robust, it does seem to cause fast convergence of most SCF iterations in our tests. However, we find it is typically still more expensive than the approach from \cref{sec:opt:incr}. In addition, in rare cases, there is no convergence after many resets; then the unconstrained manifold optimization can be used as a fallback.

There are many options to construct an SCF fixed-point iteration here; some seem to never converge, but we have had a fair amount of success with the one presented above. However, more research is needed to find robust, well-understood, and computationally efficient NEPv solvers for our problems. One could consider adapting techniques from, e.g.,~\cite{doi:10.1137/130910014,janssens2025linearizing}.

\subsection{Comparison} \label{sec:opt:comp}
We now compare various ways to approximately solve the manifold optimization problem for finding $V_s$ when $\alpha>0$. We use the linear test problem from \cref{sec:results} with $d_x=d_y=100$ and reduce only the output dimension, with likelihood-informed dimension reduction using $\alpha\in\{0.5, 1\}$. We compare (i) solving the full optimization problem with \texttt{Manopt.jl} \cite{Bergmann2022}, as in \cref{sec:opt:manifold}, (ii) incrementally building nested subspaces, as in \cref{sec:opt:incr}, with \texttt{Manopt.jl}, and (iii) building those nested subspaces with the NEPv-based technique described in \cref{sec:opt:nepv}.

\Cref{fig:res:linear-incr} shows the $W_2$ error between the true and approximate posteriors as a function of the reduced output dimension $s$. For $\alpha=1$, solving the full optimization problem at once slightly outperforms the other approaches, indicating that the optimal subspace might not be nested. For $\alpha=0.5$, all three methods perform similarly until $s$ becomes rather large, where the full optimization plateaus somewhat.

In general, there is only a moderate difference in performance between the various approximate solvers of the optimization problem. For the experiments in \cref{sec:results}, we therefore use incremental subspaces with \texttt{Manopt.jl} as they show robust performance (as opposed to the NEPv approach, which sometimes fails to converge and then needs \texttt{Manopt.jl} as a fallback) and are convenient due to the nested structure.

\begin{figure}
    \centering
        \begin{tikzpicture}
        \begin{groupplot}[group style={{group size=2 by 1, horizontal sep=2cm, vertical sep=2cm}},name=t2, height=0.37\textwidth, width=0.53\textwidth,legend style={transpose legend,legend columns=0,draw=none}]
        \nextgroupplot[ymode=log, xmode=log,log base x=2, grid=major, xlabel={$s$}, ylabel={$W_2$}, legend to name=grouplegend, unbounded coords=discard, every axis plot/.append style={thick, mark options={solid, scale=0.9}}, title={$\alpha=0.5$}, ymin=0.5e-8, ymax=2e1]
            \addplot[color=black, dotted, mark=o] table [x=dim, y={err_α=0.5_α=0.5 (full)}, col sep=comma]{data/linear-incr/W2-100-100-false-true.csv};

            \addplot[color=ferngreen, solid, mark=o] table [x=dim, y={err_α=0.5_α=0.5 (incr.)}, col sep=comma]{data/linear-incr/W2-100-100-false-true.csv};

            \addplot[color=winered, dashed, mark=o] table [x=dim, y={err_α=0.5_α=0.5 (NEPv)}, col sep=comma]{data/linear-incr/W2-100-100-false-true.csv};

        \nextgroupplot[xshift=-.1\textwidth, ymode=log, xmode=log,log base x=2, grid=major, xlabel={$s$}, legend to name=grouplegend, unbounded coords=discard, every axis plot/.append style={thick, mark options={solid, scale=0.9}}, title={$\alpha=1$}, ymin=0.5e-8, ymax=2e1, yticklabels={}]
            \addplot[color=black, dotted, mark=square, forget plot] table [x=dim, y={err_α=1.0_α=1.0 (full)}, col sep=comma]{data/linear-incr/W2-100-100-false-true.csv};
            \addlegendimage{color=black, dotted, no marks}
            \addlegendentry{$\mathrm{full}\;\;$}

            \addplot[color=ferngreen, solid, mark=square, forget plot] table [x=dim, y={err_α=1.0_α=1.0 (incr.)}, col sep=comma]{data/linear-incr/W2-100-100-false-true.csv};
            \addlegendimage{color=ferngreen, solid, no marks}
            \addlegendentry{$\mathrm{incr.\;(\texttt{Manopt.jl})}\;\;$}

            \addplot[color=winered, dashed, mark=square, forget plot] table [x=dim, y={err_α=1.0_α=1.0 (NEPv)}, col sep=comma]{data/linear-incr/W2-100-100-false-true.csv};
            \addlegendimage{color=winered, dashed, no marks}
            \addlegendentry{$\mathrm{incr.\;(NEPv)}$}
        \end{groupplot}

        \node at (6.2,0)[below, yshift=-2\pgfkeysvalueof{/pgfplots/every axis title shift}]{\ref*{grouplegend}};
    \end{tikzpicture}
    \caption{Wasserstein-2 posterior errors as a function of the reduced output dimension $s$, comparing various ways to solve the manifold optimization problem in the output space}
    \label{fig:res:linear-incr}
\end{figure}

\printbibliography

\end{document}